\journal{Nuclear Physics B}
\newtheorem{definition}{Definition}[section]
\newtheorem{theorem}{Theorem}[section]
\newtheorem{lemma}{Lemma}[section]
\newtheorem{prop}{Proposition}[section]
\theoremstyle{remark}
\newtheorem{remark}{Remark}[section]
\newtheorem{example}{Example}[section]
\newcommand{\abs}[1]{\left|#1\right|}
\newcommand{\norm}[1]{\left\|#1\right\|}
\newcommand{\round}[1]{\left(#1\right)}
\newcommand{\LtwoStwo}{L^2(\mathbb{S}^2)}
\newcommand{\sothree}{\mathrm{so}(3)}
\newcommand{\SOthree}{\mathrm{SO}(3)}
\newcommand{\SUtwo}{\mathrm{SU}(2)}
\newcommand{\Stwo}{\mathbb{S}^2}
\newcommand{\uvnu}{\hat{\bm{\nu}}}
\newcommand{\uvnun}{\hat{\bm{\nu}}_n}
\newcommand{\z}{\hat{\bm{z}}}
\newcommand{\uvrhon}{\hat{\bm{\rho}}_n}
\newcommand{\psiell}{\psi^{(\ell)}}
\newcommand{\phiell}{\varphi^{(\ell)}}
\newcommand{\Aell}{A^{(\ell)}}
\newcommand{\rmi}{\mathrm{i}}
\newcommand{\rme}{\mathrm{e}}
\newcommand{\rmd}{\mathrm{d}}
\renewcommand{\Re}{\mathrm{Re}}
\renewcommand{\@makefnmark}{\hbox{\@textsuperscript{\normalfont\@thefnmark}}}
\renewcommand{\@makefntext}[1]{\parindent 1em\noindent
  \hb@xt@1.8em{\hss\@textsuperscript{\normalfont\@thefnmark}}#1}
\begin{document}

\begin{frontmatter}

\title{Slow convergence of Trotter decomposition for rotations}

\author[1,2]{Paolo Facchi} 
\author[1]{Francesco Perrini}
\author[1,2]{Vito Viesti}

\affiliation[1]{organization={Dipartimento Interateneo di Fisica, Universit\`a  di Bari},
            postcode={I-70126},
            city={Bari},
            country={Italy}}
\affiliation[2]{organization={INFN, Sezione di Bari},
            postcode={I-70126},
            city={Bari},
            country={Italy}}

\begin{abstract}
We study the Trotter approximation for a pair of orbital angular momentum operators, $L_x$ and $L_y$. In particular, we investigate the scaling behavior of the state-dependent Trotter error. 
We show that for states in the domains of the orbital angular momentum operators
the Trotter error 
scales as $n^{-1}$,  
where $n$ is the number of time steps. Instead, the convergence rate can be arbitrarily slow for states that do not belong to the domains of the angular momentum operators. 
\end{abstract}

\begin{keyword}

Trotter Formula
\sep Angular Momentum
\sep Unbounded Operators
\sep Slow Convergence

\end{keyword}
\end{frontmatter}

\section*{Introduction}
The time evolution of a quantum system is governed by the Schr\"{o}dinger equation. As such, solving it represents one of the fundamental quests of quantum physics.
However, exact solutions are often challenging to obtain, if not outright infeasible. In such cases, approximation methods become essential. One particularly powerful approach is the Trotter product formula, which allows the implementation of complex dynamics by applying simpler dynamics in an alternating way~\cite{Trotter1959,kato1978trotter,suzuki1985,Lloyd1996UniversalQS}.

The core idea is to subdivide the total evolution time into smaller intervals, referred to as \textit{Trotter steps}. Within each step, the system is let to evolve not under the full Hamiltonian, but rather under  simpler parts of it, whose dynamics are easier to implement. As the discretization becomes finer, this \textit{Trotterized dynamics} approaches the one generated by the full Hamiltonian, which is usually referred to as the \textit{target dynamics}. The main advantages of this approach are twofold: first, the approximated dynamics is usually much simpler to implement or simulate; second, the approximation becomes more accurate as the size of the time steps decreases, enabling control over the accuracy to the needs of the problem at hand.

In recent years, considerable attention has been focused on methods to control the error inherent in the Trotter approximation~\cite{Neidhardt2018,Yi2022,Chen2024,Burgarth2022oneboundtorulethem,Morales2025}. The current state of the art largely consists of upper bounds, either in norm or in the strong operator topology.

One widely established norm upper bound involves a commutator scaling~\cite{Childs2021Trotter}: the \textit{Trotter error}, that is the difference between the Trotter dynamics and the target one, is bounded by the norm of the commutator of the two operators that are Trotterized. In particular, with a first-order Trotter product formula, the approximation error scales as $O(n^{-1})$ proportionally to the norm of the commutator.

Working with operator norms is straightforward in finite-dimensional Hilbert spaces, but complications arise in the infinite-dimensional setting. In such cases, even though the Trotter product formula remains valid for unbounded operators, the norm of the commutator may diverge, rendering the error bound trivial.

This challenge has motivated the search for state-dependent error bounds. An additional consideration is that even a finite commutator bound can be overly pessimistic, as the operator norm represents a supremum and hence it intrinsically quantifies the worst-case scenario of the largest possible Trotter error. In recent works~\cite{burgarth2024strong,Becker2025Trotter}, this issue was addressed by deriving state-dependent upper bounds on the Trotter error. In \cite{burgarth2024strong}, strong evidence  was given  (further supported by numerical simulations) that there are states for which the convergence speed of the first-order Trotter approximation of unbounded operators is slower than $n^{-1}$.

Unlike upper bounds, the quest for rigorous lower bounds on the Trotter error remains relatively unexplored, although some recent attempts have begun to emerge. In \cite{Hahn2025}, the authors found that, in the finite-dimensional case, the first-order Trotter formula always exhibits a lower bound that scales with $n^{-1}$. This, together with the $O(n^{-1})$ upper bounds previously established \cite{Childs2021Trotter,burgarth2024strong}, implies a convergence rate scaling as $n^{-1}$.
Nonetheless, the extension of this result to the infinite-dimensional setting is still lacking. New behaviors, possibly with slower scaling laws, are expected to appear in such a scenario. Lastly, to our knowledge, no explicit example of states exhibiting a lower bound on the Trotter error with a slower decay than $n^{-1}$ is yet available in the literature.

In this work, our primary objective is to study a solvable model wherein the Trotter product formula can be applied and analyzed more readily than in a fully general framework. The model under consideration employs orbital angular momentum operators which, being unbounded, lie outside the conventional scope of commutator scaling approaches. Consequently, it embodies critical features and uncovers paradigmatic behaviors that pave the way for further investigations to establish general lower bounds on the Trotter error. In particular, we focus on exploring the Trotter dynamics for a pair of orbital angular momentum operators, $L_x$ and $L_y$. 

The article is organized as follows. In section~\ref{sec:1} we set up the problem and we briefly review some basic notions about spatial rotation operators, and  their generators, the orbital angular momentum operators. In section~\ref{sec:2} we use the Lie-algebraic structure of  angular momentum to reformulate the problem in an easier way: the Trotter dynamics is nothing more than a sequence of rotations, and as such they can be combined into a single rotation, depending on the number of Trotter steps~$n$. In section~\ref{sec:3} we consider states in the domains of the three orbital angular momentum operators, which we will refer to as \textit{regular} states. For such states we are able to determine the convergence rate of the Trotter error. In particular, in Proposition~\ref{proposition:convrate}, we show that the convergence scales as $n^{-1}$.  In section~\ref{sec:4} we relax this hypothesis and we consider the scenario in which states do not belong to some (Theorem~\ref{theorem:1}), or even all (Theorem~\ref{theorem:2}) the domains of the generators. We show that for such states the convergence can be slower than $n^{-1}$, thereby providing an explicit proof of the conjecture that the Trotter product formula can converge slower than $n^{-1}$ for unbounded operators. Remarkably, we prove that the convergence speed can be \emph{arbitrarily} slow.

\section{Notation and preliminaries}
\label{sec:1}
Let $\LtwoStwo$ be the Hilbert space of square integrable functions  on the unit sphere $\Stwo$. The group of proper rotations in three dimensions, $\SOthree$, has  a unitary representation on this Hilbert space, i.e.\ there is a (strongly) continuous homomorphism between $\SOthree$ and 
the group of unitary operators on $\LtwoStwo$~\cite{hall2013quantum}. In particular, the action of a rotation on an arbitrary wave function $\psi \in \LtwoStwo$ is given by
\begin{equation}
    \bigl(U(R)\psi\bigr)(\bm x)=\psi(R^{-1}\bm x), \qquad\bm x \in \Stwo,
\end{equation}
where $R\in \SOthree$ is an orthogonal matrix and $U(R)$ is its 
unitary representative on the Hilbert space~$\LtwoStwo$.  

Thanks to Stone's theorem, the group homomorphism induces a mapping from the Lie algebra $\sothree$ to the set of self-adjoint operators on $\LtwoStwo$. 
It can be proved that there exists a dense subspace of $\LtwoStwo$ on which this mapping is actually a Lie-algebra representation, i.e.\ it preserves the commutation relations~\cite{hall2013quantum,hall2003lie}.
In particular, the three generators of $\SOthree$ are mapped into the orbital angular momentum operators $\bm L=(L_x,L_y,L_z)$. These are defined through $\bm L= \bm x\times \bm p$, with $\bm x$ being the position operator,
and $\bm p=-\rmi \bm \nabla$ (using units in which $\hbar=1$). Since we will be working on the sphere, it is natural to work with spherical coordinates. In these coordinates the angular momentum operators take the well-known form~\cite{varshalovich1988}
\begin{align}
    L_x &= \rmi \left( \sin(\varphi) \frac{\partial}{\partial \theta} + \cot(\theta) \cos(\varphi) \frac{\partial}{\partial \varphi} \right), \label{eq:Lxangular}\\
L_y &= \rmi \left( -\cos(\varphi) \frac{\partial}{\partial \theta} + \cot(\theta) \sin(\varphi) \frac{\partial}{\partial \varphi} \right),  \label{eq:Lyangular}  \\
L_z &= -\rmi \frac{\partial}{\partial \varphi}\label{eq:Lzangular},
\end{align}
with $(\theta,\varphi)\in[0,\pi]\times[0,2\pi)$.

In what follows, we will study the Trotter dynamics for two orbital angular momentum operators, $L_x$ and $L_y$.
In particular, the Trotter approximation formula reads
\begin{equation}
    \round{\rme^{-\rmi\frac t n L_y}\rme^{-\rmi\frac t n L_x}}^n \approx \rme^{-\rmi t  (L_y+L_x)} , \qquad n \gg 1, 
\end{equation}
for $t\in\mathbb{R}$.
We aim to characterize the scaling behavior of the state-dependent Trotter error, i.e. the difference between the Trotterized dynamics and the target one, on an input state $\psi\in \LtwoStwo$, with $\norm{\psi}=1$. Namely, we are interested in knowing how fast the Trotter error,
\begin{equation}
    \xi_n(t;\psi)=\left\| \left[ \left(\rme^{-\rmi\frac t n L_y}\rme^{-\rmi\frac t n L_x}\right)^n-\rme^{-\rmi t  (L_y+L_x)}\right]\psi \right\|, \label{eq:trottererror}
\end{equation}
tends to zero, as  the number of Trotter steps $n\in \mathbb N$ increases, where $t \in \mathbb R$. 

The asymptotic behavior of the Trotter error as \( n \to \infty \) is characterized by its upper and lower bounds, denoted by \( O(f(n)) \) and \( \Omega(g(n)) \), respectively. 
We recall that the notation \( \xi_n(t;\psi) = O(f(n)) \) means that there exists a constant \( C > 0 \) such that, for sufficiently large \( n \),
\begin{equation}
    \xi_n(t;\psi) \le C f(n).
\end{equation}
Conversely, \( \xi_n(t;\psi) = \Omega(g(n)) \) means that there exists a constant \( C > 0 \) such that, for sufficiently large \( n \),
\begin{equation}
    \xi_n(t;\psi) \ge C g(n).
\end{equation}
If both bounds hold simultaneously  ---i.e., if $\xi_n = {O}(f(n))$ and $\xi_n = \Omega(f(n))$--- then one writes $\xi_n = \Theta(f(n))$. This indicates that the Trotter error grows asymptotically at exactly the rate of $f(n)$, up to a constant factor. If this constant factor is 1, then we use the asymptotic equivalence symbol and we write $\xi_n \sim f(n)$.

The evolution groups generated by the self-adjoint operators $L_x$, $L_y$ and $L_z$ are unitaries on the Hilbert space and in general, an arbitrary rotation about the axis specified by a unit vector $\uvnu$ by an angle $\chi$ is realised in $\LtwoStwo$ by the unitary operator $\rme^{-\rmi \chi \uvnu \cdot \bm L}$\cite{hall2013quantum}.

Rotations can be parametrised in several different ways~\cite{varshalovich1988,Sakurai_Napolitano_2020}. The parametrisation we just used, in which a rotation is specified by a rotation axis and a rotation angle, is called \textit{axis-angle} representation. Hence, to completely specify a rotation, three parameters (angles) are required: the rotation angle $\chi$, and the zenithal and azimuthal angles, $\theta$ and $\phi$, respectively, that characterize the rotation axis, namely $\uvnu=\bigl(\sin (\theta) \cos(\phi),\sin (\theta) \sin (\phi), \cos (\theta)\bigr)$. 

The Euler angles parametrisation will prove useful too. In this parametrisation, each rotation is split into a sequence of three successive rotations: the first by an angle $\alpha$ about the $z$-axis, followed by an angle $\beta$ about the $y$-axis and finally by an angle $\gamma$ again about the $z$-axis. The relations between the axis-angle parameters $(\chi, \theta, \varphi)$ and the Euler angles $(\alpha, \beta, \gamma)$ are the following~\cite{varshalovich1988}:
\begin{align}
    \sin \Bigl(\frac{\beta}{2}\Bigr)=\sin (\theta) \sin \Bigl(\frac{\chi}{2}\Bigr) , \;\;
    \tan \Bigl(\frac{\alpha+\gamma}{2}\Bigr)=\cos (\theta) \tan \Bigl(\frac{\chi}{2}\Bigr), \;\;
    \frac{\alpha-\gamma}{2}=\varphi-\frac{\pi}{2}    .
    \label{eq:euler}
\end{align}

\section{Trotter error}
\label{sec:2}
In the case considered the Trotter dynamics consists of a sequence of rotations.
Any two consecutive rotations can be combined into a single effective rotation. This is formalized in the following lemma.

\begin{lemma}
\label{lem:1}
   Let $\hat{\bm n}_1$ and $\hat{\bm n}_2$ be unit vectors in $\mathbb S^2$ and let $\omega_1$ and $\omega_2$ be two angles. Consider a rotation $\rme^{-\rmi \omega_1 \hat{\bm{n}}_1\cdot \bm L}$ about the axis $\hat{\bm{ n}}_1 $ through the angle $\omega_1 $, followed by a rotation $\rme^{-\rmi \omega_2\hat{\bm{n}}_2\cdot \bm L}$ about the axis $\hat{\bm{ n}}_2 $ through the angle $\omega_2 $. Then,
    \begin{equation}
        \rme^{-\rmi \omega_1 \hat{\bm{n}}_1\cdot \bm L}\rme^{-\rmi \omega_2 \hat{\bm{n}}_2\cdot \bm L}=\rme^{-\rmi \omega \hat{\bm{n}}\cdot \bm L}, \label{eq:composrotation0}
    \end{equation}
    with
  \begin{equation}
    \cos \Bigl(\frac{\omega}{2}\Bigr) = \cos \Bigl(\frac{\omega_1}{2}\Bigr) \cos \Bigl(\frac{\omega_2}{2}\Bigr)
      - \left(\hat{\bm{n}}_1 \cdot \hat{\bm{n}}_2\right)
        \sin \Bigl(\frac{\omega_1}{2}\Bigr) \sin \Bigl(\frac{\omega_2}{2}\Bigr), \label{eq:composrotation1}
        \end{equation}
        \begin{align}
    \hat{\bm{n}} \sin \Bigl(\frac{\omega}{2}\Bigr) = & \hat{\bm{n}}_1 \sin \Bigl(\frac{\omega_1}{2}\Bigr) \cos \Bigl(\frac{\omega_2}{2}\Bigr)
      + \hat{\bm{n}}_2 \sin \Bigl(\frac{\omega_2}{2}\Bigr) \cos \Bigl(\frac{\omega_1}{2}\Bigr) \nonumber\\
      &- \left(\hat{\bm{n}}_1 \times \hat{\bm{n}}_2\right)
        \sin \Bigl(\frac{\omega_1}{2}\Bigr) \sin \Bigl(\frac{\omega_2}{2}\Bigr). \label{eq:composrotation2}
  \end{align}
    \end{lemma}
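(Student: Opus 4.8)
The plan is to reduce the identity to a statement about the rotation group and then lift that statement to the double cover $\SUtwo$, where it becomes a short matrix computation. Recall from section~\ref{sec:1} that each factor $\rme^{-\rmi\omega\hat{\bm n}\cdot\bm L}$ is the representative $U\bigl(R_{\hat{\bm n}}(\omega)\bigr)$ of the spatial rotation $R_{\hat{\bm n}}(\omega)\in\SOthree$ about $\hat{\bm n}$ through the angle $\omega$, and that $U$ is a group homomorphism. Therefore
\begin{equation}
\rme^{-\rmi\omega_1\hat{\bm n}_1\cdot\bm L}\,\rme^{-\rmi\omega_2\hat{\bm n}_2\cdot\bm L}=U\bigl(R_{\hat{\bm n}_1}(\omega_1)\,R_{\hat{\bm n}_2}(\omega_2)\bigr),
\end{equation}
and it is enough to prove that the product of the rotation matrices $R_{\hat{\bm n}_1}(\omega_1)R_{\hat{\bm n}_2}(\omega_2)$ is again a rotation $R_{\hat{\bm n}}(\omega)$ whose axis and angle are given by \eqref{eq:composrotation1}--\eqref{eq:composrotation2}; applying $U$ then yields \eqref{eq:composrotation0}.

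For the $\SOthree$ composition law I would work through the two-to-one covering homomorphism $\SUtwo\to\SOthree$, under which $R_{\hat{\bm n}}(\omega)$ lifts to $\pm g_{\hat{\bm n}}(\omega)$ with
\begin{equation}
g_{\hat{\bm n}}(\omega)=\exp\!\left(\rmi\frac{\omega}{2}\,\hat{\bm n}\cdot\bm\sigma\right)=\cos\frac{\omega}{2}\,\mathbf 1+\rmi\sin\frac{\omega}{2}\,\hat{\bm n}\cdot\bm\sigma ,
\end{equation}
where $\bm\sigma=(\sigma_x,\sigma_y,\sigma_z)$ are the Pauli matrices and the sign in the exponent is the one consistent with the convention $\rme^{-\rmi\omega\hat{\bm n}\cdot\bm L}$ fixed above. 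Multiplying $g_{\hat{\bm n}_1}(\omega_1)g_{\hat{\bm n}_2}(\omega_2)$ and simplifying with the Pauli identity $(\bm a\cdot\bm\sigma)(\bm b\cdot\bm\sigma)=(\bm a\cdot\bm b)\,\mathbf 1+\rmi\,(\bm a\times\bm b)\cdot\bm\sigma$, the coefficient of $\mathbf 1$ turns out to be exactly the right-hand side of \eqref{eq:composrotation1} and the coefficient of $\rmi\,\bm\sigma$ exactly the right-hand side of \eqref{eq:composrotation2}; call them $c$ and $\bm s$. Since the product lies in $\SUtwo$ one has $1=\det\bigl(c\,\mathbf 1+\rmi\,\bm s\cdot\bm\sigma\bigr)=c^{2}+\abs{\bm s}^{2}$, so one may choose an angle $\omega$ and a unit vector $\hat{\bm n}$ (with $\hat{\bm n}$ arbitrary in the degenerate case $\bm s=\bm 0$, and otherwise determined up to the standard redundancy of the axis--angle parametrisation) such that $c=\cos\frac{\omega}{2}$ and $\bm s=\hat{\bm n}\sin\frac{\omega}{2}$; that is, $g_{\hat{\bm n}_1}(\omega_1)g_{\hat{\bm n}_2}(\omega_2)=g_{\hat{\bm n}}(\omega)$. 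Projecting back to $\SOthree$ gives $R_{\hat{\bm n}_1}(\omega_1)R_{\hat{\bm n}_2}(\omega_2)=R_{\hat{\bm n}}(\omega)$, which is what remained to be shown.

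There is no genuine conceptual obstacle here --- the statement is the classical quaternion multiplication law for rotations --- and the only delicate point is bookkeeping: one must keep the half-angles straight and, above all, fix the sign and orientation conventions (the choice of sign in the lift $R_{\hat{\bm n}}(\omega)\mapsto\pm g_{\hat{\bm n}}(\omega)$, and the orientation of $\hat{\bm n}_1\times\hat{\bm n}_2$) so that the output reproduces \eqref{eq:composrotation1}--\eqref{eq:composrotation2} verbatim; the $\pm$ in the lift is harmless, since it disappears upon descending to $\SOthree$ and so never enters \eqref{eq:composrotation0}. One can instead give a $\SUtwo$-free argument by writing each $R_{\hat{\bm n}_i}(\omega_i)$ via Rodrigues' formula and multiplying the $3\times 3$ matrices directly, recovering $\omega$ from the trace and $\hat{\bm n}$ from the antisymmetric part, but this route is appreciably more tedious and conceals the underlying structure.
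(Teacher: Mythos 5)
The paper does not actually prove Lemma~\ref{lem:1}: its ``proof'' is a citation to Varshalovich et al., so your argument cannot be compared to an in-paper derivation --- it supplies one where the authors chose to outsource it. Your route is the standard and correct one: reduce to the composition law in $\SOthree$ using the fact (set up in section~\ref{sec:1}) that $\rme^{-\rmi\omega\hat{\bm n}\cdot\bm L}=U(R_{\hat{\bm n}}(\omega))$ with $U$ a homomorphism, then lift to $\SUtwo$ where the product collapses to quaternion multiplication via $(\bm a\cdot\bm\sigma)(\bm b\cdot\bm\sigma)=(\bm a\cdot\bm b)\mathbf 1+\rmi(\bm a\times\bm b)\cdot\bm\sigma$; the observation that $\det$ forces $c^2+\abs{\bm s}^2=1$, so that $(c,\bm s)$ can always be written as $(\cos\frac{\omega}{2},\hat{\bm n}\sin\frac{\omega}{2})$, correctly handles existence of the resultant axis and angle, including the degenerate case $\bm s=\bm 0$. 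The one point you gesture at but do not nail down is precisely the one that matters for reproducing \eqref{eq:composrotation1}--\eqref{eq:composrotation2} \emph{verbatim}: with the lift $g_{\hat{\bm n}}(\omega)=\cos\frac{\omega}{2}\mathbf 1+\rmi\sin\frac{\omega}{2}\hat{\bm n}\cdot\bm\sigma$ that you chose, the product $g_1g_2$ taken in the same left-to-right order as \eqref{eq:composrotation0} does yield exactly the stated coefficients (in particular the minus sign on $\hat{\bm n}_1\times\hat{\bm n}_2$), whereas the more common lift $\rme^{-\rmi\omega\hat{\bm n}\cdot\bm\sigma/2}$ flips that sign, which is equivalent to composing in the opposite order. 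Since the two lifts differ by $\omega\mapsto-\omega$ rather than by an overall sign, ``the $\pm$ in the lift is harmless'' does not by itself dispose of this; you should verify explicitly that your chosen lift intertwines with $U$ in the order written in \eqref{eq:composrotation0} (or, equivalently, just present the two-line Pauli computation, which settles the matter). With that bookkeeping made explicit, the proof is complete and arguably more useful to the reader than the paper's bare citation.
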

\begin{proof}
    See \ref{appendixA}.
\end{proof}
This result allows us to obtain a closed  expression for the Trotter error~\eqref{eq:trottererror}.
\begin{prop}
\label{prop:closedexp}
    For any $t\in\mathbb{R}$, $n\in\mathbb{N}$, and any  $\psi\in L^2(\mathbb{S}^2)$, the state-dependent Trotter error~\eqref{eq:trottererror} for the two angular momentum operators $L_x$ and $L_y$ has the expression
    \begin{equation}
        \xi_n(t;\psi)=\norm{\left(\rme^{-\rmi\chi_n\uvnun \cdot \bm L}-\mathbb{I}\right) \psi },\label{eq:trottererrorrearranged}
    \end{equation}
    where $\chi_n$ and $\hat{\bm{\nu}}_n$ are given by
    \begin{align}
        \cos \Bigl(\frac{\chi_n}{2}\Bigr)=&\cos \Bigl(\frac{n\omega_n}{2}\Bigr) \cos \Bigl(\frac{t}{\sqrt{2}}\Bigr)+\left(\hat{\bm{ \rho}}_n \cdot \hat{\bm{\rho}}\right) \sin \Bigl(\frac{n\omega_n}{2}\Bigr) \sin \Bigl(\frac{t}{\sqrt{2}}\Bigr), \label{eq:composrotation3}\\ 
        \hat{\bm{\nu}}_n \sin \Bigl(\frac{\chi_n}{2} \Bigr)=&-\hat{\bm{\rho}} \sin \Bigl(\frac{t}{\sqrt{2}}\Bigr) \cos \Bigl(\frac{n\omega_n}{2}\Bigr)+\hat{\bm{ \rho}}_n \sin \Bigl(\frac{n\omega_n}{2}\Bigr) \cos \Bigl(\frac{t}{\sqrt{2}}\Bigr)
        \nonumber\\
        &-\left( \hat{\bm{ \rho}} \times \hat{\bm{ \rho}}_n\right) \sin \Bigl( \frac{n\omega_n}{2}\Bigr) \sin \Bigl( \frac{t}{\sqrt{2}}\Bigr),\label{eq:composrotation4}
    \end{align}
    with
    \begin{align}
        \omega_n&=2\arccos\left(\cos^2\Bigl(\frac{t}{2n}\Bigr)\right)\in [0,\pi],\label{eq:thetan}
        \\
        \hat{\bm{\rho}}_n&=\frac 1 {\sqrt{1+\cos^2\bigl(\frac{t}{2n}\bigr)}} 
        \left(\cos\Bigl(\frac{t}{2n}\Bigr), \cos\Bigl(\frac{t}{2n}\Bigr),-\sin\Bigl(\frac{t}{2n}\Bigr)\right)
        \in \mathbb S^2, \label{eq:rhon}\\
        \hat{\bm\rho}&=\frac{1}{\sqrt{2}}
       (1,1,0)
        \in \mathbb S^2.
    \end{align}
    Moreover, one has 
     \begin{equation}
        \uvnun\to \hat{\bm{\nu}}:=\frac{1}{\sqrt 2}\begin{pmatrix}
            \sin \bigl( \frac{t}{\sqrt 2}\bigr)\\
            -\sin \bigl( \frac{t}{\sqrt 2}\bigr)\\
          - \sqrt{2} \cos \bigl( \frac{t}{\sqrt 2}\bigr)
        \end{pmatrix},
        \label{eq:nunlimit}
    \end{equation}
    and the asymptotic behavior
    \begin{equation}
        \chi_n \sim \frac{1}{\sqrt 2}\sin \round{\frac{t}{\sqrt 2 }}\round{\frac {t} {n}},
        \label{eq:chinasympt}
    \end{equation}
   as $n \to \infty$.  
\end{prop}

\begin{proof}
     We will prove the claim in two steps. First, we will combine all the rotations within the Trotter dynamics and then we will absorb the target dynamics too.

    By applying Lemma~\ref{lem:1} with $\hat{\bm n}_1=\hat{\bm{x}}=(1, 0, 0)$, 
    $\hat{\bm n}_2=\hat{\bm{y}}= (0,1,0)$
    and $\omega_1=\omega_2=t/n$, we have, according to equations~\eqref{eq:composrotation0}--\eqref{eq:composrotation2}, 
    \begin{equation}
        \rme^{-\rmi \frac t n L_y}\rme^{-\rmi \frac t n L_x}= \rme^{-\rmi  \omega_n \uvrhon \cdot \bm L},
    \end{equation}
    with $\omega_n$ and $\hat{\bm{\rho}}_n$ given by~\eqref{eq:thetan} and~\eqref{eq:rhon}, respectively.
    Therefore, the Trotter error~\eqref{eq:trottererror} reads
    \begin{equation}
        \xi_n(t;\psi)=\norm{\left[ \rme^{-\rmi n\omega_n \uvrhon \cdot \bm L}-\rme^{-\rmi \sqrt 2 t\hat{\bm{\rho}} \cdot \bm L}\right]\psi}.
        \label{eq:21}
    \end{equation}
    
    From equations~\eqref{eq:thetan}--\eqref{eq:rhon} one has that $\omega_n \to 0 $ and $\uvrhon\to \uvnu$, as $n\to \infty$. Moreover, using the expansion 
    \begin{equation}
        \arccos(x)=  \sqrt{2(1-x)}+ O((1-x)^{3/2}) \qquad  \text{as } x
        \uparrow 1, \label{eq:arccos}
    \end{equation}
    one gets that $n \omega_n \to \sqrt 2 t $ as $n \to \infty$.

    Now, by using the unitary invariance of the norm, the target dynamics can be factored out and \eqref{eq:21} reads
    \begin{equation}
         \xi_n(t;\psi)=\norm{\left[ \rme^{\rmi \sqrt 2 t\hat{\bm{\rho}} \cdot \bm L}\rme^{-\rmi n\omega_n \uvrhon \cdot \bm L}-\mathbb{I}\right]\psi}.
    \end{equation}
   
   By using again Lemma~\ref{lem:1},  $\rme^{\rmi\sqrt 2 t \hat{\bm{\rho}} \cdot \bm L }$ and $\rme^{-\rmi n \omega_n \uvrhon\cdot \bm L }$ can be combined in a single rotation $\rme^{-\rmi\chi_n\hat{\bm{\nu}}_n\cdot\bm{L}}$, where $\chi_n$ and $\uvnun$ are the equivalent rotation angle and axis,  respectively, defined in~\eqref{eq:composrotation3} and~\eqref{eq:composrotation4}. 
   
   We present here the formula for $\chi_n$:
    \begin{align}
      \cos\Bigl(\frac{\chi_n} 2\Bigr)=\cos\left(\frac t{\sqrt 2 }\right)\cos\Bigl(\frac {n\omega_n}{2} \Bigr) +\frac{\sqrt 2  \cos (t/2n)}{\sqrt{1+ \cos^2(t/2n)}}\sin\Bigl(\frac{t }{\sqrt 2 }\Bigr) \sin\Bigl(\frac {n\omega_n} {2} \Bigr). \label{eq:coschi}
    \end{align} 
    The right-hand side of equation~\eqref{eq:coschi} converges to 1 as $n\to \infty$, so that $\chi_n \to 0 $. 
    
    Moreover, a refined asymptotic analysis shows that $\chi_n = \Theta( n^{-1})$ as $n \to \infty$. In fact, denoting the right-hand side of~\eqref{eq:coschi} with $z_n$, its Taylor expansion reads
    \begin{equation}
        z_n=1-\frac 1 4 \sin^2 \round{\frac{t }{\sqrt 2 }} \round{\frac{t}{2n}}^2+ O\round{\round{\frac{t}{2n}}^3}. \label{eq:taylorexp}
    \end{equation}
    Inverting equation~\eqref{eq:coschi}, and using once again formula~\eqref{eq:arccos}, we find
    \begin{equation}
        \chi_n \sim \frac{1}{\sqrt 2}\sin \round{\frac{t}{\sqrt 2 }}\round{\frac {t} {n}} \qquad \text{as $n\to +\infty$}.
    \end{equation}
    
    As for the rotation axis $\uvnun$, a direct computation yields \eqref{eq:nunlimit}.
   Therefore, the zenithal and azimuthal angles defining $\uvnun$ are such that $(\theta_n, \phi_n)\to (\pi-t/\sqrt 2, -\pi/4 )$ as $n\to\infty$.
\end{proof}

\section{Convergence rate of the Trotter error for regular states}\label{sec:3}
Having obtained a closed expression for the Trotter error, we are now ready to analyse its asymptotic behavior in the large-$n $ limit.

\begin{prop}
\label{proposition:convrate}
    Consider the Trotter error $\xi_n(t;\psi)$ for the angular momentum operators $L_x$ and $L_y$, given by equation~\eqref{eq:trottererrorrearranged}. If we assume that $\psi \in D(\bm L)=D(L_x)\cap D(L_y)\cap D(L_z)$, then one has
    \begin{equation}
        \xi_n(t;\psi)=\chi_n \norm{\int_0^1 \rmd u\, \rme^{-\rmi u \chi_n \uvnun \cdot \bm L} \round{\uvnun \cdot \bm L}\psi} \label{eq:trotterintegral}.
    \end{equation}
\end{prop}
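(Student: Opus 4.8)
The plan is to reduce the problem, via Proposition~\ref{prop:closedexp}, to the single equivalent rotation $\rme^{-\rmi\chi_n\uvnun\cdot\bm L}$ and then apply a Duhamel-type identity — the fundamental theorem of calculus for the one-parameter unitary group $u\mapsto\rme^{-\rmi u\chi_n\uvnun\cdot\bm L}$ on $[0,1]$ — to the regular state $\psi$. Indeed Proposition~\ref{prop:closedexp} already supplies $\xi_n(t;\psi)=\norm{(\rme^{-\rmi\chi_n\uvnun\cdot\bm L}-I)\psi}$, and with the branch convention $\chi_n/2=\arccos(\cdot)\in[0,\pi]$ used in~\eqref{eq:composrotation3} (consistently with~\eqref{eq:thetan}) one has $\chi_n\ge 0$; it then remains only to expand $(\rme^{-\rmi\chi_n\uvnun\cdot\bm L}-I)\psi$ as an integral of the generator applied along the flow.

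The only place the hypothesis $\psi\in\D(\bm L)$ is genuinely used is in pinning down the domain and action of the generator $A_n:=\uvnun\cdot\bm L$. This $A_n$ is the self-adjoint generator of the rotations about the axis $\uvnun$ (as recalled in Section~\ref{sec:1}); on the dense domain $\D(\bm L)=\D(L_x)\cap\D(L_y)\cap\D(L_z)$ the symmetric operator $(\uvnun)_x L_x+(\uvnun)_y L_y+(\uvnun)_z L_z$ is essentially self-adjoint with closure $A_n$ — the standard fact about orbital angular momentum operators, whose common core of finite linear combinations of spherical harmonics is invariant and a core for each $L_i$ and for $A_n$ (see also Appendix~\ref{appendix} and~\cite{hall2013quantum}). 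Hence $\D(\bm L)\subseteq\D(A_n)$ and, for $\psi\in\D(\bm L)$, $A_n\psi=(\uvnun)_x L_x\psi+(\uvnun)_y L_y\psi+(\uvnun)_z L_z\psi$, so the vector $(\uvnun\cdot\bm L)\psi$ appearing in~\eqref{eq:trotterintegral} is well defined. I expect this to be the main, if routine, subtlety: it is precisely the failure of $\psi$ to lie in all three domains at once that will make the right-hand side of~\eqref{eq:trotterintegral} meaningless in the examples of Section~\ref{sec:4}.

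With $\psi\in\D(A_n)$ in hand, Stone's theorem gives that $u\mapsto\rme^{-\rmi u\chi_n A_n}\psi$ is norm-differentiable on $[0,1]$ with derivative $-\rmi\chi_n A_n\rme^{-\rmi u\chi_n A_n}\psi=-\rmi\chi_n\rme^{-\rmi u\chi_n A_n}A_n\psi$, where I use that $\rme^{-\rmi u\chi_n A_n}$ leaves $\D(A_n)$ invariant and commutes with $A_n$ there (functional calculus). Since $A_n\psi$ is a fixed vector and the group is strongly continuous, this derivative is norm-continuous in $u$, so the fundamental theorem of calculus for Banach-space-valued maps yields
\[
 \round{\rme^{-\rmi\chi_n\uvnun\cdot\bm L}-I}\psi=\int_0^1\rmd u\,\frac{\rmd}{\rmd u}\,\rme^{-\rmi u\chi_n\uvnun\cdot\bm L}\psi=-\rmi\chi_n\int_0^1\rmd u\,\rme^{-\rmi u\chi_n\uvnun\cdot\bm L}\round{\uvnun\cdot\bm L}\psi .
\]
Taking norms of both sides and using $\chi_n\ge 0$ to pull the scalar out of $\norm{\cdot}$ then gives exactly~\eqref{eq:trotterintegral}; no estimate beyond this identity is needed at this stage, the asymptotics of $\chi_n$ and of $\norm{(\uvnun\cdot\bm L)\psi}$ being deferred to the subsequent analysis of the convergence rate.
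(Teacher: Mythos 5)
Your proposal is correct and follows essentially the same route as the paper: the paper's proof is precisely the Duhamel/fundamental-theorem-of-calculus identity $\round{\rme^{-\rmi\chi_n\uvnun\cdot\bm L}-I}\psi=-\rmi\chi_n\int_0^1\rmd u\,\rme^{-\rmi u\chi_n\uvnun\cdot\bm L}\round{\uvnun\cdot\bm L}\psi$ followed by taking norms, stated there in one line. Your additional care about $\D(\bm L)\subseteq\D(\uvnun\cdot\bm L)$ via essential self-adjointness on the span of spherical harmonics, and about the sign of $\chi_n$, merely makes explicit what the paper leaves implicit.
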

\begin{proof}
    Since $\psi \in D(\bm L)$, we have that
    \begin{equation}
        \left(\rme^{-{\rmi}\chi_n \uvnun \cdot \bm L}-\mathbb{I}\right)\psi= -\rmi\chi_n\int_0^1 \rmd u\, \rme^{-{\rmi}u\chi_n \uvnun\cdot \bm L } (\uvnun \cdot \bm L)\psi,
    \end{equation}
    whence the claim.
\end{proof}
With this expression in hand, the scaling behavior of the Trotter error is manifest. In fact, if we prove that the previous integral converges to some finite, non-zero quantity, we have that $\xi_n(t;\psi)=\Theta(\chi_n)=\Theta( n^{-1})$.

\begin{theorem}
     Let $\psi$ be a regular wave function, $\psi\in D(\bm L)$. Then the Trotter error~\eqref{eq:trottererror} for the angular momentum operators $L_x$ and $L_y$ has the following asymptotics:
     \begin{equation}
      \xi_n(t;\psi) \sim  \norm{(\uvnu \cdot \bm L)\psi} \chi_n 
        \sim \frac{1}{\sqrt{2}}\sin \round{\frac{t}{\sqrt 2 }} \norm{(\uvnu \cdot \bm L)\psi} \frac {t} {n}, \quad \text{as} \quad n\to +\infty.
    \end{equation}
where 
$\hat{\bm\nu}$ and $\chi_n$ are given in Proposition~\ref{prop:closedexp}.
\end{theorem}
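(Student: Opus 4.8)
The plan is to start from the closed expression obtained in Proposition~\ref{proposition:convrate},
\begin{equation}
\xi_n(t;\psi)=\chi_n\,\norm{\int_0^1\rmd u\,\rme^{-\rmi u\chi_n\uvnun\cdot\bm L}\round{\uvnun\cdot\bm L}\psi},
\end{equation}
and to show that the vector-valued integral on the right converges in $\LtwoStwo$ to $\round{\uvnu\cdot\bm L}\psi$ as $n\to\infty$. Since $\chi_n=\Theta(n^{-1})$ by Proposition~\ref{prop:closedexp}, this already yields $\xi_n(t;\psi)\sim\norm{\round{\uvnu\cdot\bm L}\psi}\,\chi_n$ (to be read as $\xi_n(t;\psi)=\order{\chi_n}$ in the degenerate case $\round{\uvnu\cdot\bm L}\psi=0$). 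The second asymptotic equivalence then follows by inserting $\uvnu=\frac{1}{\sqrt2}(1,1,0)$, so that $\uvnu\cdot\bm L=\frac{1}{\sqrt2}(L_x+L_y)$ and $\norm{\round{\uvnu\cdot\bm L}\psi}=\frac{1}{\sqrt2}\norm{(L_x+L_y)\psi}$, together with the asymptotics~\eqref{eq:chinasympt} for $\chi_n$.

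By the reverse triangle inequality it is enough to prove that $\int_0^1\rmd u\,\rme^{-\rmi u\chi_n\uvnun\cdot\bm L}\round{\uvnun\cdot\bm L}\psi$ converges in norm to $\round{\uvnu\cdot\bm L}\psi=\int_0^1\rmd u\,\round{\uvnu\cdot\bm L}\psi$. Bounding the norm of the integral by the integral of the norm, inserting $\rme^{-\rmi u\chi_n\uvnun\cdot\bm L}\round{\uvnu\cdot\bm L}\psi$ and using unitarity, the task reduces to showing that both
\begin{equation}
\norm{\round{\uvnun\cdot\bm L}\psi-\round{\uvnu\cdot\bm L}\psi}
\qquad\text{and}\qquad
\sup_{u\in[0,1]}\norm{\round{\rme^{-\rmi u\chi_n\uvnun\cdot\bm L}-I}\round{\uvnu\cdot\bm L}\psi}
\end{equation}
tend to $0$ as $n\to\infty$. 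For the first quantity, the regularity hypothesis $\psi\in\D(\bm L)$ allows us to write $\round{\uvnun\cdot\bm L}\psi=\sum_{i}(\uvnun)_i\,L_i\psi$ (sum over $i\in\{x,y,z\}$), whence it is bounded by $\sum_i\abs{(\uvnun)_i-(\uvnu)_i}\,\norm{L_i\psi}$; this vanishes because $\uvnun\to\uvnu$ in $\R^3$ (Proposition~\ref{prop:closedexp}) and each $\norm{L_i\psi}$ is finite. This is exactly the step that requires all three domains $\D(L_x)$, $\D(L_y)$, $\D(L_z)$, since for finite $n$ the axis $\uvnun$ has, in general, a nonzero $\z$-component.

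For the second quantity the naive first-order bound $\norm{(\rme^{-\rmi s\uvnun\cdot\bm L}-I)\phi}\le\abs{s}\,\norm{(\uvnun\cdot\bm L)\phi}$ is unavailable, because $\phi:=\round{\uvnu\cdot\bm L}\psi$ need not lie in the domain of any angular momentum operator. Instead I would use strong continuity of the rotation representation $U$ (Section~\ref{sec:1}): for every $u\in[0,1]$ one has $\rme^{-\rmi u\chi_n\uvnun\cdot\bm L}=U(R_n(u))$, where $R_n(u)\in\SOthree$ is the rotation by the angle $u\chi_n$ about $\uvnun$; by Rodrigues' formula $\norm{R_n(u)-I}\le\abs{\sin(u\chi_n)}+\bigl(1-\cos(u\chi_n)\bigr)\le\chi_n+\frac{1}{2}\chi_n^2\to0$, so the rotations $R_n(u)$ approach the identity element of $\SOthree$ uniformly in $u$ (and in the axis), and strong continuity of $U$ then forces $\sup_{u\in[0,1]}\norm{(U(R_n(u))-I)\phi}\to0$. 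Combining the two bounds finishes the proof. The step I expect to be the main obstacle is precisely this second term: since $\phi=\round{\uvnu\cdot\bm L}\psi$ carries no additional regularity, one cannot quantify the rate at which $\rme^{-\rmi u\chi_n\uvnun\cdot\bm L}$ tends to the identity on it; for the leading-order asymptotics it is nevertheless enough that this correction be $\order{1}$, which follows softly from strong continuity of the rotation-group representation combined with the uniform smallness of the angle $u\chi_n$. Were one to assume more, for instance $\psi\in\D(\bm L^2)$, this term would instead be $O(\chi_n)$, which would permit extracting the next order in the expansion of $\xi_n(t;\psi)$.
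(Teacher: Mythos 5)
Your proposal is correct and follows essentially the same route as the paper's proof: both start from the integral representation of Proposition~\ref{proposition:convrate}, split the integrand's deviation from $(\uvnu\cdot\bm L)\psi$ into the term $\norm{(\uvnun-\uvnu)\cdot\bm L\,\psi}$ (controlled by $\uvnun\to\uvnu$ and finiteness of each $\norm{L_i\psi}$) and the term $\norm{(\rme^{-\rmi u\chi_n\uvnun\cdot\bm L}-I)(\uvnu\cdot\bm L)\psi}$ (controlled by strong convergence of the unitaries to the identity), and then conclude via $\uvnu\cdot\bm L=\tfrac{1}{\sqrt2}(L_x+L_y)$ and the asymptotics~\eqref{eq:chinasympt}. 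The only cosmetic differences are that you use a uniform-in-$u$ estimate where the paper invokes dominated convergence, and that you add the (sensible) caveat about the degenerate case $(L_x+L_y)\psi=0$.
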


\begin{proof}
The theorem follows from Proposition~\ref{proposition:convrate} and the asymptotics~\eqref{eq:chinasympt} of $\chi_n$ if we prove that
    \begin{equation}
     \lim_{n\to +\infty}\int_0^1 \rmd t\, \rme^{-\rmi t \chi_n \uvnun \cdot \bm L}(\uvnun \cdot \bm L )\psi  =(\uvnu \cdot \bm L)\psi.
        \end{equation}
In order to prove the claim we want to exchange the limit with the integral. Notice that the integrand is bounded as
\begin{align}
     \left\|\rme^{-\rmi t \chi_n \uvnun \cdot \bm L}(\uvnun \cdot \bm L )\psi\right\|=\left\|(\uvnun \cdot \bm L )\psi\right\|\le\sum_{i=1}^3\|L_i\psi\|<+\infty,
\end{align}
where we used the invariance of the norm under unitaries and the fact that each component of the unit vector $\uvnun$ is less or equal than 1, in modulus.
Thus, by dominated convergence we can exchange the limit with the integral:
\begin{equation}
     \lim_{n\to \infty} \int_0^1 \rmd t\, \rme^{-\rmi t \chi_n \uvnun \cdot \bm L}(\uvnun \cdot \bm L )\psi= \int_0^1 \rmd t\,\lim_{n\to \infty} \rme^{-\rmi t \chi_n \uvnun \cdot \bm L}(\uvnun \cdot \bm L )\psi.
\end{equation}
We have moreover that
\begin{align}
    &\left\| \rme^{-\rmi t \chi_n \uvnun \cdot \bm L}(\uvnun \cdot \bm L )\psi-(\uvnu\cdot \bm L)\psi \right\|\nonumber\\
    &\hspace{1.5cm}\le \left\| \rme^{-\rmi t \chi_n \uvnun \cdot \bm L}(\uvnun -\uvnu)\cdot \bm L \psi\right\| +\left\|\left(\rme^{-\rmi t \chi_n \uvnun \cdot \bm L}-\mathbb{I}\right)(\uvnu\cdot \bm L )\psi\right\|\nonumber\\
&\hspace{1.5cm}= \left\|(\uvnun -\uvnu)\cdot \bm L \psi\right\|+\left\|\left(\rme^{-\rmi t \chi_n \uvnun \cdot \bm L}-\mathbb{I}\right)(\uvnu\cdot \bm L) \psi \right\|\nonumber\\
    &\hspace{1.5cm}\le \sum_{i=1}^3 |(\uvnun)_i -\uvnu_i| \|L_i\psi\|+\left\|\left(\rme^{-\rmi t \chi_n \uvnun \cdot \bm L}-\mathbb{I}\right)(\uvnu\cdot \bm L) \psi \right\|.
\end{align}
The first term tends to 0 since $\uvnun \to \uvnu$ as $n \to \infty$, and the second one goes to 0 as well, since $\rme^{-\rmi t\chi_n\uvnun \cdot \bm L}$ converges strongly to $\mathbb{I}$, in the same limit. From this, the claim  follows.
\end{proof}
We have thus shown that, if the Trotter error is evaluated on a state belonging to the domain of the three rotation generators, then it scales as~${\Theta}(1/n)$. 

This regularity condition can be equivalently expressed in terms of the domain of the modulus of the angular momentum operator. Indeed, observe that
\begin{align}
\norm{L_x\psi}^2+\norm{L_y\psi}^2+\norm{L_z\psi}^2&=\langle L_x \psi| L_x \psi \rangle+\langle L_y \psi| L_y \psi \rangle+\langle L_z \psi| L_z \psi \rangle \nonumber\\
    &=\langle \psi| \bm L^2 \psi \rangle
    =\norm{\abs{\bm L}\psi}^2,
\end{align}
hence we have $D(\bm L)=D(\abs{\bm L})$. Here, the operator $\abs{\bm L}$ is the square root of the total angular momentum $\bm L^2=L_x^2+L_y^2+L_z^2$. 

In~\ref{appendixB}, we further clarify the regularity condition on wave functions by linking it to their differentiability properties. Specifically, we prove that the domain $D(\bm{L})$ of the orbital angular momentum operator coincides with \( H^1(\mathbb{S}^2) \), the first Sobolev space of square integrable functions on the sphere \(\mathbb{S}^2\) with square integrable (weak) derivatives.

In the next section, we will show that for less regular states, the Trotter error can decay more slowly, with a scaling that is a smaller power of $1/n$.

\section{Fractional scaling for non-regular states}
\label{sec:4}
In order to handle less regular scenarios, in which states do not belong to the domain $D(\bm{L})$ of the orbital angular momentum operator, it proves useful to work with the basis of spherical harmonics in $\LtwoStwo$
which are simultaneous eigenstates of $\bm L^2$ and $L_z$.

Recall that the operator $\bm{L}^2$ has eigenvalues given by $\ell(\ell + 1)$, with $\ell \in \mathbb{N}$. For each value of $\ell$, there is a corresponding invariant subspace $\mathcal{H}_\ell$ under the action of $\mathrm{SO}(3)$, which has dimension $2\ell + 1$. Within each $\mathcal{H}_\ell$, $L_z$ can be diagonalized, having integer eigenvalues $m$ such that $-\ell \le m \le \ell$. We will denote by $Y_{\ell, m}$ the eigenvector corresponding to $\ell$ and $m$. The set
\begin{equation}
    \{ Y_{\ell, m} : \ell \ge 0,\ -\ell \le m \le \ell \}
\end{equation}
is the orthonormal basis of spherical harmonics in the Hilbert space $\LtwoStwo$.

Thus, an arbitrary vector $\psi \in \LtwoStwo$ can be decomposed as
\begin{equation}
    \psi= \sum_{\ell \ge 0} \psi^{(\ell)},\qquad \psi^{(\ell)}=\sum_{-\ell\le m \le \ell} \psi_{\ell,m}Y_{\ell,m} \in \mathcal H_\ell,\label{eq.summationrange}
\end{equation}
with
\begin{equation}
    \|\psi\|^2 = \sum_{\ell \ge 0} \|\psi^{(\ell)}\|^2 = \sum_{\ell \ge 0}\sum_{-\ell \le m \le \ell}|\psi_{\ell,m}|^2<+\infty.
\end{equation}

In the subspace $\mathcal H_\ell$ the action of a rotation  is realized through a $(2\ell +1)$-dimensional matrix $D^{(\ell)}_{mm'}$,  known as Wigner $D$-matrix. Specifically, one has
\begin{align}
    D^{(\ell)}(R) =\!\!\!\! \sum_{-\ell\leq m,m'\leq\ell} \!\!\! D_{mm'}^{(\ell)}(R) |Y_{\ell,m}\rangle\langle Y_{\ell,m'}|, \qquad 
    D_{mm'}^{(\ell)}(R) =\bigl\langle Y_{\ell m} | U(R)Y_{\ell m'}\bigr\rangle, 
\end{align}
for $R \in \SOthree$.
The Wigner $D$-matrix is a well-studied object, see~\cite{varshalovich1988} for a comprehensive treatment. Thanks to the irreducibility of the action of $\SOthree$ on $\mathcal H_\ell$, one has that $D^{(\ell)}(R)\mathcal H_\ell=\mathcal H_\ell$ for all $R\in\SOthree$.

Within this framework, the Trotter error can be decomposed into contributions from each subspace $\mathcal{H}_\ell$. Using the orthogonality of the subspaces $\mathcal{H}_\ell$, the Trotter error formula~\eqref{eq:trottererrorrearranged} reads
\begin{equation}
    \xi_n^2(t; \psi) = \sum_{\ell \ge 1} \xi_n^2\bigl(t; \psi^{(\ell)}\bigr) = \sum_{\ell \ge 1} \norm{\left[ D^{(\ell)}(\chi_n, \uvnun) - \mathbb{I}^{(\ell)} \right] \psi^{(\ell)}}^2, \label{eq:trottersumell}
\end{equation}
where $D^{(\ell)}(\chi_n, \uvnun)$ denotes the Wigner $D$-matrix corresponding to the rotation specified by the axis-angle pair $(\chi_n, \uvnun)$ and $\mathbb{I}^{(\ell)}$ is the identity of $\mathcal{H}_\ell$.

We will now proceed to show that there exist states outside $D(\bm L)$ for which the scaling behavior of the Trotter error is slower than $n^{-1}$.
\subsection{Eigenstate of \texorpdfstring{$L_z$}{Lz}}
\label{example:1}

Let $\psi$ be a normalized eigenstate of the operator $L_z$ with  zero eigenvalue.
Such a state admits a decomposition in terms of spherical harmonics of the form
\begin{equation}
    \psi = \sum_{\ell \ge 0} \psi_{\ell,0} \, Y_{\ell,0}, 
    \qquad 
    \sum_{\ell \ge 0} |\psi_{\ell,0}|^2 =1.
    \label{eq:Lz0eigenstate}
\end{equation}
Assume that the squared moduli of the coefficients satisfy
\begin{equation}
    |\psi_{\ell,0}|^2 = \frac{C}{\ell^{1+\gamma}}, 
    \qquad 
    \ell > 0,
    \label{eq:exprpsi}
\end{equation}
with $C > 0$ and $\gamma \in (0,2)$.

We now show that such a state does not belong to the domains of $L_x$ and $L_y$.
To this end, we consider the ladder operators 
$L_\pm = L_x \pm \rmi L_y$, whose action on spherical harmonics is given by
\begin{align}
    L_+ Y_{\ell,m} &= \sqrt{(\ell - m)(\ell + m + 1)}\, Y_{\ell,m+1}, 
    \nonumber\\
    L_- Y_{\ell,m} &= \sqrt{(\ell + m)(\ell - m + 1)}\, Y_{\ell,m-1}.
\end{align}
For the state $\psi$ defined above, one finds
\begin{equation}
    \|L_+ \psi\|^2 = \|L_- \psi\|^2
    = \sum_{\ell > 0} \ell(\ell+1)\,\frac{C}{\ell^{1+\gamma}} = +\infty,
\end{equation}
which shows that $\psi \notin D(L_+)$ and $\psi \notin D(L_-)$, and consequently
$\psi \notin D(L_x)$ and $\psi \notin D(L_y)$.

However, the state $\psi$ belongs to the domain of $\abs{\bm{L}}^s$ for all $s < \gamma/2$.
Indeed, since $\bm{L}^2$ is a non-negative self-adjoint operator, one can define its fractional powers through their action on the spherical-harmonic basis:
\begin{equation}
    \abs{\bm{L}}^s Y_{\ell,m} = \left[\ell(\ell+1)\right]^{\frac{s}{2}} Y_{\ell,m}, 
    \qquad \ell \ge 0, \ \abs{m} \le \ell.
\end{equation}
Hence, the vector $\psi$ defined in~\eqref{eq:exprpsi} satisfies
\begin{equation}
    \bigl\|\abs{\bm{L}}^s \psi\bigr\|^2
    = \sum_{\ell > 0} \frac{C}{\ell^{1+\gamma}}\, [\ell(\ell+1)]^{s}.
    \label{eq:series}
\end{equation}
Since
\begin{equation}
    \frac{[\ell(\ell+1)]^s}{\ell^{1+\gamma}}
    \sim \ell^{-(1+\gamma-2s)},
\end{equation}
as $\ell\to \infty$, the series in~\eqref{eq:series} converges whenever $1+\gamma-2s>1$, 
that is, for all $s < \gamma/2$.

In the following theorem, we analyze the Trotter error associated with the state $\psi$,
and show that its rate of convergence is slower than $n^{-1}$.

\begin{theorem}
\label{theorem:1}
    Let $\psi$ be an eigenstate of $L_z$ with eigenvalue zero, and assume that the squared
    moduli of its coefficients in~\eqref{eq:Lz0eigenstate} for $\ell\geq 1$ satisfy
    \begin{equation}
        |\psi_{\ell,0}|^2 = \frac{C}{\ell^{1+\gamma}},
        \qquad C > 0, \quad \gamma \in (0,2).
        \label{eq:Lzeigenstate}
    \end{equation}
    Then, the Trotter error~\eqref{eq:trottererror} satisfies 
    \begin{equation}
        \xi_n(t;\psi) = \Omega\bigl(n^{-\gamma/2}\bigr),
        \qquad \text{as } n \to \infty.
        \label{eq:lbn-gamma/2}
    \end{equation}
In particular, the convergence rate can be made arbitrarily slow by choosing an arbitrarily small exponent $\gamma$.
    
\end{theorem}
\begin{proof}
Consider equation~\eqref{eq:trottersumell}. With $A^{(\ell)}:=D^{(\ell)}(\chi_n,\uvnun)-\mathbb{I}^{(\ell)}$, using Cauchy-Schwartz inequality we have
\begin{equation}
    \|\Aell\psiell\|^2\ge|\langle \phiell | \Aell\psi^{(\ell)}\rangle |^2  \label{eq:ellcauchyschwartz}
\end{equation}
for an arbitrary $\phiell \in \mathcal H_\ell$, with $\|\phiell\|=1$. According to our choice of $\psi$,  
by choosing $\phiell_m=\delta_{m,0}$, the RHS of equation~\eqref{eq:ellcauchyschwartz} becomes
\begin{equation}
    |\langle\phiell | \Aell\psiell\rangle|^2=\abs{\Aell_{0,0}\psi_{\ell,0}}^2.
\end{equation}
In the Euler angles parametrization, it is possible to show that 
\begin{equation}
    D^{(\ell)}_{0,0}(\alpha_n,\beta_n,\gamma_n)=P_\ell(\cos \beta_n),
\end{equation}
 where $(\alpha_n,\beta_n,\gamma_n)$ are the Euler angles corresponding to $(\chi_n,\uvnun)$ and $P_\ell(x)$ is the $\ell$-th Legendre polynomial~\cite{varshalovich1988}.

According to equation~\eqref{eq:euler}, we have
\begin{equation}\beta_n= \Theta( \chi_n)=\Theta(n^{-1}).\end{equation}

Notice that $P_\ell(x)$ is real for all $x\in \mathbb R$ and $|P_\ell(x)|\le 1$ for $x\in [-1,1]$. Summing up, the Trotter error can be bounded from below by
\begin{equation}
    \xi_n^2(t;\psi)\ge \sum_{\ell \ge 1}|\Aell_{0,0}|^2\abs{\psi_{\ell,0}}^2=\sum_{\ell \ge 1}(1-P_\ell (\cos\beta_n))^2\abs{\psi_{\ell,0}}^2. \label{eq:cio}
\end{equation}
The following inequality for Legendre polynomials holds~\cite{martinLegendre}:
\begin{equation}
    P_\ell(\cos(\beta_n)) \le \frac{1}{[1+\ell(\ell+1)\sin^2(\beta_n)]^{1/4}}.
\end{equation}
Therefore, we have
\begin{equation}
    1-P_\ell(\cos(\beta_n)) \ge 1- \frac{1}{[1+\ell^2\sin^2(\beta_n)]^{1/4}}.
\end{equation}

Let
\begin{equation}
    f(x)=1-(1+x)^{-\frac 1 4 }, \qquad  x \ge 0.
\end{equation}
Note that $f$ is monotonically increasing and concave, with $f(x)\le 1$, $f(0)=0$, and $f(x)\to 1$  as $x \to+ \infty$. Therefore, for $0 \le x \le 1$, $f(x)$ is bounded from below by the chord connecting the origin and the point $(1,f(1))$, i.e.
\begin{equation}
    f(x) \ge \kappa x, \qquad \kappa:= f(1)=1-2^{-1/4}\approx 0.159. \label{eq:bound}
\end{equation}

We can apply this bound to the RHS of equation~\eqref{eq:cio}, once we realize that, for $\ell \le L_n:=\lfloor (\sin(\beta_n) )^{-1}\rfloor $, we have $\ell^2 \sin^2(\beta_n) \le 1$. So, applying the bound~\eqref{eq:bound} to equation~\eqref{eq:cio} and discarding the terms in the sum with $\ell>L_n$, yields
\begin{equation}
    \sum_{\ell \ge 1} \left( 1-(1+\ell^2 \sin^2(\beta_n))^{-1/4}\right)^2|\psi_{\ell,0}|^2\ge \kappa^2  \sin^4(\beta_n)\sum_{1 \le \ell \le L_n} \ell ^{4}|\psi_{\ell,0}|^2.
\end{equation}
Plugging in the expression~\eqref{eq:exprpsi} for $|\psi_{\ell,0}|^2$, we obtain
\begin{equation}
    \xi_n^2(t;\psi)\ge \kappa^2 C \sin^4(\beta_n)\sum_{1 \le \ell \le L_n} \ell ^{3-\gamma}.
\end{equation}
The previous sum can be bounded from below by an integral:
\begin{equation}
    \sum_{1 \le \ell \le L_n}\ell^{3-\gamma} \ge \int_0^{L_n}x^{3-\gamma} \, \rmd x=\frac{1}{4-\gamma}L_n^{4-\gamma}.
\end{equation}
So we finally arrive at
\begin{equation}
\xi_n^2(t;\psi)\ge \kappa^2 C \sin^4(\beta_n)\frac{1}{4-\gamma}L_n^{4-\gamma} = 
\Theta\left({n^{-\gamma}}\right), \label{eq:firstlowerbound}
\end{equation}
since $L_n =\Theta(n)$, for large $n$. This means that 
\begin{equation}
  \xi_n(t;\psi)= \Omega(n^{-\gamma/2}),  \quad \text{with} \quad  \gamma<2, 
\end{equation}
as $n\to\infty$.
In the limit $\gamma \downarrow 0$, the convergence rate can be made arbitrarily slow.
\end{proof}

\begin{remark}
\label{rmk:ub-Lz}
It is not difficult to show that in fact \eqref{eq:lbn-gamma/2} is the actual convergence rate of the Trotter error, that is
\begin{equation}
        \xi_n(t;\psi) = \Theta\bigl(n^{-\gamma/2}\bigr),
        \qquad \text{as } n \to \infty,
    \end{equation}
by providing an upper bound. Indeed, by splitting the sum in~\eqref{eq:trottersumell}, one gets
\begin{equation}
    \xi_n^2(t; \psi) \leq \sum_{1\leq \ell \leq n} \xi_n^2(t; \psi^{(\ell)})
     + 4 \sum_{\ell > n} \norm{ \psi^{(\ell)}}^2  , 
\end{equation}
where 
we used the fact that $\norm{D^{(\ell)}} = \norm{ \mathbb{I}^{(\ell)}} =1$.
From the expression~\eqref{eq:trotterintegral} in Proposition~\ref{proposition:convrate} we have
\begin{equation}
        \xi_n^2(t; \psi^{(\ell)}) \leq\chi_n^2 \norm{ \round{\uvnun \cdot \bm L}\psi^{(\ell)}}^2\leq\chi_n^2 \ell(\ell+1)\norm{\psi^{(\ell)}}^2 ,
        \label{eq:ub-linear}
    \end{equation}
whence
\begin{align}
    \xi_n^2(t; \psi) &\leq \chi_n^{2}   \sum_{1\leq \ell \leq n} \frac{C(\ell +1)}{\ell^\gamma}
     + \sum_{\ell > n} \frac{C}{\ell^{1+\gamma}} 
     \nonumber\\
     &= O(n^{-2}) O(n^{2-\gamma}) + O(n^{-\gamma}) =
     O(n^{-\gamma}) .
     \label{ub-Lzfinal}
\end{align}
\qed{}
\end{remark}

In the previous theorem, the state belonged to the domain of only one of the generators of the rotation group, namely $L_z$. Now, following the same approach, we will consider a similar scenario in which the input state does not belong to any of the domains of the orbital angular momentum operators. Once again, as one could expect, the scaling behavior of the Trotter error will be slower than in the regular case.

\subsection{Eigenstate of \texorpdfstring{$L_+$}{L+}}
\label{example:2}

Let $\psi$ be an eigenvector of the ladder operator $L_+ = L_x + \rmi L_y$ with eigenvalue~$0$. 
Its decomposition in spherical harmonics reads
\begin{equation}
    \psi = \sum_{\ell \ge 0} \psi_{\ell,\ell} \, Y_{\ell,\ell}, 
    \qquad 
    \sum_{\ell \ge 0} |\psi_{\ell,\ell}|^2 < +\infty.
    \label{eq:L+0eigenstate}
\end{equation}
Let us fix $\gamma \in (0,2)$ and assume that
\begin{equation}
    |\psi_{\ell,\ell}|^2 =\frac{C}{\ell^{1+\gamma}}, \qquad \ell \ge 1.
    \label{eq:example2assumption}
\end{equation}
For $\gamma \in (0,1]$, it is immediate to see that $\psi$ does not belong to the domain 
of any of the three generators $L_x$, $L_y$, or $L_z$. However, for all $\gamma>0$ it belongs to the domain of $\abs{\bm{L}}^s$ for all $s < \gamma/2$. 

In the following theorem we show that the Trotter error associated with this state 
decays slower than $n^{-1}$.

\begin{theorem}
\label{theorem:2}
    Let $\psi$ be an eigenstate of $L_+$ with eigenvalue~$0$, 
    and assume that the squared moduli of its coefficients in~\eqref{eq:L+0eigenstate} for $\ell \ge 1$ satisfy
    \begin{equation}
        |\psi_{\ell,\ell}|^2 = \frac C {\ell^{1+\gamma}}, 
        \qquad C>0, \quad \gamma \in (0,2).
        \label{eq:L+eigenstate}
    \end{equation}
    Then, the Trotter error~\eqref{eq:trottererror} scales as 
    \begin{equation}
        \xi_n(t;\psi) = \Omega\bigl(n^{-\frac \gamma 2}\bigr),
        \qquad n \to \infty.
        \label{eq:lbn-gamma}
    \end{equation}
\end{theorem}
\begin{proof}

The Trotter error reads
\begin{equation}
    \xi_n^2(t;\psi)=\sum_{\ell\geq 1}\norm{\left[D^{(\ell)}(\alpha_n,\beta_n,\gamma_n)-\mathbb I^{(\ell)}\right]\psi^{(\ell)}}^2.
\end{equation}
Since $\psi^{(\ell)}=\psi_{\ell,\ell} Y_{\ell,\ell}$, $\psi_{\ell,\ell}\in\mathbb C$, then
\begin{align}
    \norm{\sum_m\left[D_{m,\ell}^{(\ell)}-\delta_{m,\ell}\right]\psi_{\ell,\ell} Y_{\ell,m}}^2&=\abs{\psi_{\ell,\ell}}^2\sum_m\abs{D^{(\ell)}_{m,\ell}-\delta_{m,\ell}}^2\nonumber\\
    &=\abs{\psi_{\ell,\ell}}^2\left[ \sum_{m:m\neq \ell}\abs{D^{(\ell)}_{m,\ell}}^2+\abs{D^{(\ell)}_{\ell,\ell}-1}^2\right]\nonumber\\
    &=2\abs{\psi_{\ell,\ell}}^2\left[1-\Re{(D^{(\ell)}_{\ell,\ell})}\right],
\end{align}
where we used the unitarity of $D^{(\ell)}_{m,m'}$, which implies that the squared moduli of the elements in any column of $D^{(\ell)}_{m,m'}$ sum to one.

The $(\ell,\ell)$ matrix element of the Wigner $D$-matrix is \cite{varshalovich1988}
\begin{equation}
    D^{(\ell)}_{\ell,\ell}=\rme^{-\rmi(\alpha_n+\gamma_n)\ell}\cos\left( \frac{\beta_n}{2}\right)^{2\ell},
\end{equation}
so that the Trotter error reads
\begin{align}
    \xi_n^2(t;\psi)&=\sum_\ell2\left[1-\cos\left[\left(\alpha_n+\gamma_n\right)\ell\right]\cos\left(\frac{\beta_n}{2}\right)^{2\ell}\right]\abs{\psi_{\ell,\ell}}^2
    \nonumber\\
    &=2\sum_\ell  \left(1-\cos\left[\left(\alpha_n+\gamma_n\right)\ell\right]\right)\abs{\psi_{\ell,\ell}}^2 \nonumber\\&\hspace{1cm}+2\sum_{\ell}\cos\left[\left(\alpha_n+\gamma_n\right)\ell\right]\left(1-\cos \left(\frac{\beta_n}{2}\right)^{2\ell}\right)\abs{\psi_{\ell,\ell}}^2 
    \label{eq:Troterr-L+form}
\end{align}
with both $(\alpha_n+\gamma_n)$ and $\beta_n$ of order $\Theta(\chi_n)=\Theta(n^{-1})$, as can be seen by their defining equations \eqref{eq:euler}. In the second step of the previous equation we added and subtracted $\cos\left[\left(\alpha_n+\gamma_n\right)\ell\right]$.

{Upon restricting the first sum in \eqref{eq:Troterr-L+form} to the range where $(\alpha_n+\gamma_n)\ell \le \pi$}, i.e.
\begin{equation}
    \ell \le L_n:=\left\lfloor\frac{\pi}{\alpha_n+\gamma_n}\right\rfloor=\Theta(n),
\end{equation}
and substituting the expression \eqref{eq:L+eigenstate} for $\abs{\psi_\ell}^2$, we find
\begin{align}
    \sum_{\ell \ge 1}\left(1-\cos\left[\left(\alpha_n+\gamma_n\right)\ell\right]\right) \ell^{-1-\gamma}&\ge \sum_{1\le \ell \le L_n} \frac{2C}{\pi^2}\left(\alpha_n+\gamma_n \right)^2 \ell^{1-\gamma}
    \nonumber\\&=\Theta(n^{-\gamma}),
    \label{eq:lb-L+firstterm}
\end{align}
where we used the inequality $1-\cos(x) \ge \frac{2}{\pi^2} x^2 $ for $x \le \pi$. Similarly, 
\begin{equation}
   \abs{\sum_{\ell \ge 1}\cos\left[\left(\alpha_n+\gamma_n\right)\ell\right]\biggl(1-\cos \left(\frac{\beta_n}{2}\right)^{2\ell}\biggr)\ell^{-1-\gamma} }\le\sum_{1\le \ell\le L_n}\!\!\beta_n^2\ell^{-\gamma}=O\left(n^{-1-\gamma}\right).  
   \label{eq:up-L+secondterm}
\end{equation}
In the previous equation we used the inequality $1-\cos(a/2)^{2x}\le a^2x$, with $0<a<1$ and $x\ge0$, which follows easily from standard bounds on elementary functions.

Therefore, summing up we have
\begin{equation}
    \xi_n^2(t;\psi)=\Omega(n^{-\gamma})+O(n^{-1-\gamma})=\Omega(n^{-\gamma}),
\end{equation}
whence the claim.
 \end{proof}

 \begin{remark}
 Once again, it is easy to see that \eqref{eq:lbn-gamma} is the actual convergence rate of the Trotter error. A bound like the one in \eqref{eq:ub-linear}-\eqref{ub-Lzfinal} gives $\xi_n^2(t;\psi)=O(n^{-\gamma})$. Alternatively, one can 
 derive the same upper bound starting from expression \eqref{eq:Troterr-L+form}.
 In fact, the only part that remains to be controlled is the tail of the first term in equation \eqref{eq:Troterr-L+form}, for which we have
 \begin{equation}
    \sum_{\ell>L_n}  2\left(1-\cos\left[\left(\alpha_n+\gamma_n\right)\ell\right]\right)\abs{\psi_{\ell,\ell}}^2 \le \sum_{\ell >L_n}4\ell^{-1-\gamma}=O(n^{-\gamma}),
 \end{equation}
 and the claim follows.
\end{remark}

\section{Conclusions}
\label{sec:conclusions}
In this work, we considered  the first-order Trotter product formula for a pair of orbital angular momentum operators, $L_x$ and $L_y$. We showed that the Lie algebraic structure of these operators can be leveraged to obtain a closed form for the expression of the state-dependent Trotter error. In this new form, the approximated dynamics is again a rotation whose angle and axis both depend on $n$, while the target dynamics is the identity. Based on this, we were able to find a sufficient condition for a state $\psi$ so that the Trotter error on $\psi$ has a scaling behavior of $\Theta(n^{-1})$, in the limit $n\to +\infty$. Specifically, such condition requires that $\psi \in D(\bm L)=D(L_x)\cap D(L_y)\cap D(L_z)=D(\abs{\bm L})$. 

We then relaxed the assumption \( \psi \in D(\bm{L}) \), and considered two distinct situations. 
In the first, the state belongs to the domain of one of the orbital angular momentum components, but not to the other two. 
In the second, the state does not belong to the domain of any component. 
In both cases, \( \psi \in D(\abs{\bm{L}}^s) \) for some \( s < 1 \), and we found a scaling behavior slower than~\( n^{-1} \), which, with a suitable choice of the state, can in fact be made arbitrarily slow.

Future research could focus on characterizing the complete set of  states with fractional  scaling of the Trotter error. Additionally, it would be valuable to determine the sharpest possible lower bound under the condition $\psi \in D(|\bm{L}|^s)$ for $0 < s < 1$.

\section*{Acknowledgements}
We acknowledge support  from PNRR MUR project CN00000013-``Italian National Centre on HPC, Big Data and Quantum Computing'' and from the Italian funding within the ``Budget MUR - Dipartimenti di Eccellenza 2023--2027''  - Quantum Sensing
and Modelling for One-Health (QuaSiModO).
PF and VV acknowledge support from INFN through the project \mbox{``QUANTUM''} 
and from the Italian National Group of Mathematical Physics (GNFM-INdAM).
FP acknowledges support from the Italian Space Agency through the project ``Subdiffraction Quantum Imaging (SQI)'' n.~2023-13-HH.0 CUP F93C23000160005.

\appendix
\section{}
\renewcommand{\thedefinition}{A.\arabic{definition}}
\renewcommand{\theprop}{A.\arabic{prop}}
\renewcommand{\theexample}{A.\arabic{example}}
\renewcommand{\thelemma}{A.\arabic{lemma}}

\label{appendixA}

Lemma \ref{lem:1} formalizes how two consecutive rotations
can be combined into a single effective rotation.
The purpose of this appendix is to provide, 
for the sake of the reader, an explicit proof of the result, which appears to be missing in the literature.

\begin{proof}[Proof of Lemma~\ref{lem:1}]
\leavevmode\newline
This result can be proved in different ways. Here we provide a proof based on the covering map existing between $\SUtwo$ and $\SOthree$ \cite{hall2013quantum}. To every rotation about an axis \( \hat{\bm{n}} \) by an angle \( \omega \), 
one can uniquely associate the element 
\( \rme^{-\rmi \frac{\omega}{2} \hat{\bm{n}} \cdot \bm{\sigma}} \) (up to a sign), 
where \( \bm{\sigma} \) denotes the vector of Pauli matrices. This mapping is a Lie-group homomorphism, and hence it preserves group products. Hence, two successive rotations can be composed as follows: first, we lift each rotation from $\SOthree$ to $\SUtwo$. (It is immaterial whether we use $\pm \exp\left[{-\rmi \frac{\omega} 2 \hat{\bm n}\cdot \bm \sigma}\right]$, as in the end the information about the sign will be lost upon projecting back to $\SOthree$.) 
Then, using known formulas for $\SUtwo$ matrices we multiply the rotation matrices and we write this product as a single rotation. Finally, we project back to $\SOthree$.

 In particular, using the equality 
 \begin{equation}
    \rme^{-\rmi\frac \omega 2 \hat{\bm n}\cdot \bm{\sigma}}=\cos\left(\frac \omega2\right)\mathbb{I}-\rmi \sin \left(\frac \omega 2\right)(\hat{\bm n}\cdot \bm{\sigma}), 
 \end{equation}
 for $\hat{\bm n } \in \Stwo$ and $\omega \in \mathbb R$, we have that
    \begin{align}
        \rme^{-{\rmi} \frac{\omega_2}{2}  \hat{\bm{n}}_2\cdot \bm{\sigma}}\rme^{-\rmi \frac{\omega_1}{2}  \hat{\bm{n}}_1\cdot \bm{\sigma}}=&\left[\cos \Bigl(\frac {\omega_2 }{2}\Bigr)  \mathbb{I} -\rmi \sin \Bigl(\frac {\omega_2 }{2}\Bigr)\left( \hat{\bm{n}}_2 \cdot \bm{\sigma}\right) \right]\nonumber\\
        &\times\left[ \cos\Bigl(\frac{\omega_1}{2}\Bigr)\mathbb{I} -\rmi \sin\Bigl(\frac {\omega_1 }{2}\Bigr)\left( \hat{\bm{n}}_1 \cdot \bm{\sigma}\right)   \right]\nonumber\\
        =&\left[\cos\Bigl(\frac{\omega_1}{2}\Bigr) \cos \Bigl(\frac{\omega_2}{2}\Bigr)- \hat{\bm{n}}_1 \cdot  \hat{\bm{n}}_2 \sin\Bigl(\frac{\omega_1}{2}\Bigr) \sin\Bigl(\frac{\omega_2}{2}\Bigr)\right] \mathbb{I}\nonumber \\
        &-\rmi\left[  \sin\Bigl(\frac{\omega_1}{2}\Bigr) \cos \Bigl(\frac{\omega_2}{2}\Bigr) \hat{\bm{n}}_1 +  \sin \Bigl(\frac{\omega_2}{2}\Bigr) \cos \Bigl(\frac{\omega_1}{2}\Bigr)\hat{\bm{n}}_2\right.\nonumber\\
        &\left.\hspace{1cm}-  \sin \Bigl(\frac{\omega_1}{2}\Bigr) \sin \Bigl(\frac{\omega_2}{2}\Bigr)\hat{\bm{n}}_1 \times  \hat{\bm{n}}_2\right] \cdot \bm{\sigma} \label{eq:su2composofrot},
    \end{align}
    where the formula 
    \begin{equation}
        (\hat{\bm{ n}}_2 \cdot \bm{\sigma})( \hat{\bm{n}}_1 \cdot \bm{\sigma})=( \hat{\bm{n}}_1 \cdot  \hat{\bm{n}}_2) \mathbb{I}-\rmi( \hat{\bm{n}}_1 \times  \hat{\bm{n}}_2) \cdot \bm{\sigma}
    \end{equation}
    was  used.
    Now, recasting \eqref{eq:su2composofrot} in terms of a single rotation  $\rme^{-{\rmi} \frac{\omega}{2}  \hat{\bm{n}}\cdot \bm{\sigma}}$, we obtain for  $\omega$ and $\hat{\bm n}$ the expressions \eqref{eq:composrotation1} and \eqref{eq:composrotation2} of Lemma~\ref{lem:1}. 
\end{proof}

\section{}
\renewcommand{\thedefinition}{B.\arabic{definition}}
\renewcommand{\theprop}{B.\arabic{prop}}
\renewcommand{\theexample}{B.\arabic{example}}
\renewcommand{\thelemma}{B.\arabic{example}}
\label{appendixB}
In this appendix we are going to show that the intersection of the domains of the orbital angular momentum operators $D(\bm L)$ can be precisely characterized using Sobolev space theory on the unit sphere. The two-dimensional unit sphere $\Stwo$ is naturally equipped with a Riemannian manifold structure. The standard angular coordinates $(\theta,\varphi) \in  [0,\pi] \times [0,2\pi)$, representing the zenithal and azimuthal angles respectively, induce the metric tensor $g_{ij}=\mathrm{diag}(1,\sin^2\theta)$. In these coordinates, the gradient operator takes the form $\nabla_i=(\partial_\theta, \partial_\varphi)$.
\begin{definition}
   The first Sobolev space over $\Stwo$ is defined as
    \begin{equation}
        H^1(\Stwo)=\left\{ u \in \LtwoStwo: |\nabla u| \in \LtwoStwo \right\},
    \end{equation}
    with  squared modulus of the (weak) gradient, expressed in angular coordinates, being equal to 
     \begin{equation}
        |\nabla u|^2=\nabla_iu^*\nabla^iu= \left| \frac{\partial u}{\partial \theta}\right|^2+\frac{1}{\sin^2(\theta)}\left| \frac{\partial u}{\partial \varphi}\right|^2.
    \end{equation}
\end{definition}

It can be proved that $H^1(\Stwo)$ is a Hilbert space when endowed with the norm $H^1(\Stwo)\ni u\mapsto\norm{u}^2_{H^1(\Stwo)}=\int_{\Stwo}|u|^2 \, \rmd \Omega +\int_{\Stwo} |\nabla u|^2 \, \rmd \Omega$, with $\rmd \Omega=\sin \theta\, \rmd \theta \rmd \varphi$ the usual angular measure~\cite{Hebey1996SobolevSO,hebey1999nonlinear}.

In the next proposition we prove that $H^1(\mathbb{S}^2)$ coincides with the domain of the orbital angular momentum $D(\bm{L})= D(L_x) \cap D(L_y) \cap D(L_z)$.

\begin{prop}
\label{prop:sobolev}
The intersection of the domains of the three orbital angular momentum operators coincides with the first Sobolev space over the unit sphere. Namely,
    \begin{equation}
        D(\bm L)=H^1(\Stwo).
    \end{equation}
\end{prop}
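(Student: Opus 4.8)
The plan is to prove the two inclusions $H^1(\Stwo)\subseteq D(\bm L)$ and $D(\bm L)\subseteq H^1(\Stwo)$, both powered by one elementary pointwise identity. First I would establish that, for $u\in C^\infty(\Stwo)$, a direct substitution of \eqref{eq:Lxangular}--\eqref{eq:Lzangular} gives, in the angular coordinates with $\theta\in(0,\pi)$,
\[
|L_x u|^2+|L_y u|^2+|L_z u|^2=\left|\pdv{u}{\theta}\right|^2+\frac{1}{\sin^2\theta}\left|\pdv{u}{\varphi}\right|^2=|\nabla u|^2 .
\]
Indeed the cross terms proportional to $\cot\theta\,\sin\varphi\cos\varphi\,\mathrm{Re}\!\left(\overline{\partial_\theta u}\,\partial_\varphi u\right)$ cancel between $|L_x u|^2$ and $|L_y u|^2$, leaving $|\partial_\theta u|^2+\cot^2\theta\,|\partial_\varphi u|^2$, and adding $|L_z u|^2=|\partial_\varphi u|^2$ turns the coefficient of $|\partial_\varphi u|^2$ into $1+\cot^2\theta=\sin^{-2}\theta$. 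Since the two poles have zero measure, integrating against $\rmd\Omega$ yields $\sum_i\|L_i u\|^2=\int_{\Stwo}|\nabla u|^2\,\rmd\Omega$; equivalently $\bm L^2=-\Delta_{\Stwo}$ on $C^\infty(\Stwo)$, and in particular $\langle u,\bm L^2 u\rangle=\int_{\Stwo}|\nabla u|^2\,\rmd\Omega$.

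\textbf{$H^1(\Stwo)\subseteq D(\bm L)$.} Given $u\in H^1(\Stwo)$, I would pick smooth $u_k\to u$ in $H^1(\Stwo)$ (smooth functions are dense in $H^1$ of a compact manifold). Applying the identity above to $u_k-u_j$ shows that each sequence $(L_i u_k)_k$ is Cauchy in $\LtwoStwo$, hence convergent; since $L_i$ is self-adjoint, a fortiori closed, one gets $u\in D(L_i)$ and $L_i u=\lim_k L_i u_k$ for $i=x,y,z$, i.e.\ $u\in D(\bm L)$.

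\textbf{$D(\bm L)\subseteq H^1(\Stwo)$.} Here I would use the orthogonal decomposition $\LtwoStwo=\bigoplus_{\ell\ge0}\mathcal H_\ell$. Each $\mathcal H_\ell$ is finite-dimensional and invariant under every $L_i$ (rotations preserve the eigenspaces of the Casimir $\bm L^2$), so for $u\in D(\bm L)$ one has $\|L_i u\|^2=\sum_\ell\|L_i u^{(\ell)}\|^2$ and, using the purely algebraic relation $\sum_i L_i^2=\bm L^2=\ell(\ell+1)\mathbb{I}^{(\ell)}$ on $\mathcal H_\ell$,
\[
\sum_i\|L_i u\|^2=\sum_{\ell\ge0}\,\sum_i\|L_i u^{(\ell)}\|^2=\sum_{\ell\ge0}\ell(\ell+1)\,\|u^{(\ell)}\|^2<+\infty .
\]
The partial sums $u_N=\sum_{\ell\le N}u^{(\ell)}$ are smooth, and from the first step together with the orthogonality relations $\langle\nabla u^{(\ell)},\nabla u^{(\ell')}\rangle=\langle u^{(\ell)},\bm L^2 u^{(\ell')}\rangle=\ell'(\ell'+1)\langle u^{(\ell)},u^{(\ell')}\rangle$ one gets $\|\nabla(u_N-u_M)\|^2=\sum_{M<\ell\le N}\ell(\ell+1)\|u^{(\ell)}\|^2\to0$. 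Hence $(u_N)$ is Cauchy in the complete space $H^1(\Stwo)$; its limit there must agree with its $\LtwoStwo$-limit $u$, so $u\in H^1(\Stwo)$. Combining with the previous step gives $D(\bm L)=H^1(\Stwo)$, and the displayed computation simultaneously identifies this space with $D(\abs{\bm L})$.

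\textbf{Expected main obstacle.} None of the manipulations above is delicate; the subtle point is the legitimacy of treating $L_x$ and $L_y$ as the differential operators \eqref{eq:Lxangular}--\eqref{eq:Lyangular} when carrying out the approximation arguments, since their coefficients are singular at the poles. One must know that the self-adjoint generators supplied by Stone's theorem actually coincide with these expressions on a domain rich enough for the completion/core arguments to run. This is guaranteed by the essential self-adjointness of $L_x,L_y,L_z$ on the span of the spherical harmonics: each $Y_{\ell,m}$ is an entire vector for $\rme^{-\rmi t L_i}$ because $L_i$ maps the finite-dimensional $\mathcal H_\ell$ into itself (so $\|L_i^k Y_{\ell,m}\|\le\ell^k$), whence Nelson's analytic vector theorem applies — this is the Lie-algebra representation property recalled in Section~\ref{sec:1}. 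With that in hand, the three steps close the argument.
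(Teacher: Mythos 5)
Your proof is correct, but it takes a genuinely different route from the paper's. The paper proves the two inclusions separately and concretely: for $H^1(\Stwo)\subseteq D(\bm L)$ it bounds $\|L_z\psi\|$ by the Sobolev norm and then shows that $H^1(\Stwo)$ is invariant under rotations (by a change of variables in the gradient integral), so that $L_x$ and $L_y$ are handled by conjugating them to $L_z$; for $D(\bm L)\subseteq H^1(\Stwo)$ it computes $\|L_x\psi\|^2+\|L_y\psi\|^2=\int\bigl(|\partial_\theta\psi|^2+\cot^2\theta\,|\partial_\varphi\psi|^2\bigr)\rmd\Omega$ and recovers the missing $\sin^{-2}\theta$ weight by splitting the sphere into equatorial and polar regions, using $\|L_z\psi\|$ on the former and the $\cot^2\theta$ term on the latter. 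Your argument instead rests on the single pointwise identity $\sum_i|L_iu|^2=|\nabla u|^2$, which makes the graph norm of $\bm L$ literally equal to the $H^1$ norm, and then imports standard functional analysis (density of smooth functions in $H^1$, closedness of self-adjoint operators, the Casimir relation on each $\mathcal H_\ell$, completeness of $H^1$) to transfer the identity to the two domains. Your route is slicker --- indeed, combined with $1+\cot^2\theta=\sin^{-2}\theta$, your identity shows the paper's own equation~\eqref{eq:conditionLx+Ly} plus~\eqref{eq:conditionLz} already gives the second inclusion in one line, without the domain splitting --- and it additionally delivers $D(\bm L)=D(\abs{\bm L})$ with equality of norms. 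The paper's route is more elementary and self-contained, avoiding approximation and spectral arguments, and its rotation-invariance observation is of independent interest. One small point to tighten in yours: in the first inclusion you approximate by arbitrary smooth $u_k$, while your justification that the differential expressions agree with the self-adjoint generators is phrased via essential self-adjointness on the span of spherical harmonics; either take the $u_k$ from that span (which is also dense in $H^1(\Stwo)$), or note separately that every $C^\infty(\Stwo)$ function lies in $D(L_i)$ with $L_i$ acting as stated. This is routine but should be said explicitly.
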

\begin{proof}
    We will prove the double inclusion. Let us start with $ H^1(\mathbb S^2)\subseteq D(\bm L)$. According to the definition of the first Sobolev space over the sphere, for a function $\psi \in H^1(\mathbb S^2)$, we have
    \begin{align}
        \int_{\Stwo}\left|\frac{\partial \psi}{\partial \theta}\right|^2 \, \rmd \Omega < +\infty, \qquad  \int_{\Stwo} \frac{1}{\sin^2(\theta)}\left| \frac{\partial \psi}{\partial \varphi}\right|^2 \, \rmd \Omega <+\infty. \label{eq:sobolevconditions}
    \end{align}
     Now consider $L_z=-\rmi\partial/\partial \varphi$:
    \begin{equation}
        \|L_z\psi\|^2=\int  \,\left| \frac{\partial\psi}{\partial \varphi}\right|^2 \rmd\Omega\le \int\, \frac{1}{\sin^2(\theta)}\left| \frac{\partial \psi}{\partial \varphi}\right|^2 \rmd \Omega <+\infty.
        \label{eq:Lzfinite}
    \end{equation}

    At this point we could follow the same route for $L_x$ and $L_y$, showing that conditions~\eqref{eq:sobolevconditions} are sufficient to guarantee that $\norm{L_x  \psi}$ and $\norm{L_y \psi}$ are finite. Instead of doing so, we show that the Sobolev space is invariant under rotations and this, together with~\eqref{eq:Lzfinite}, will be enough to prove the inclusion.
    
    Indeed, take a rotation $U=U(R)$, $R\in \SOthree$ and $\psi \in H^1(\Stwo)$ and consider 
    \begin{equation}
        \norm{U\psi}^2_{H^1(\Stwo)}=\int  |U\psi|^2 \, \rmd \Omega +\int |\nabla(U\psi)|^2 \, \rmd \Omega.
    \end{equation}
    The first term is the  squared norm of $U\psi$ in $\LtwoStwo$, and since $U$ is unitary we have $\norm{U\psi}_{\LtwoStwo}=\norm{\psi}_{\LtwoStwo}$. 
     As for the second term, recall that $U$ acts on $\psi$ as
    \begin{equation}
       \psi'(\theta,\varphi)= \bigl(U(R)\psi\bigr)(\theta,\varphi)=\psi\bigl(R^{-1}(\theta,\varphi)\bigr).
    \end{equation}
    Then we can change variables within the integral according to $(\theta,\varphi)\to (\theta',\varphi')=R^{-1}(\theta,\varphi)$; using the invariance of the Lebesgue measure under rotations, i.e. $\rmd\Omega=\rmd \Omega'$, we have that
    \begin{equation}
        \int \rmd\Omega \, \left|\nabla \psi'(\theta,\varphi)\right|^2=\int \rmd \Omega' \, |\nabla '\psi(\theta',\varphi')|^2 
        =\int \rmd\Omega \, \left|\nabla \psi(\theta,\varphi)\right|^2,
    \end{equation}
    where $\nabla'$ is just the gradient with respect to the new variables $(\theta',\varphi')$, and in the second step we simply dropped the prime. Summing up we have
    \begin{equation}
         \left\|U\psi \right\|^2_{H^1(\mathbb S^2)}=\int \rmd\Omega\, |\psi|^2+\int \rmd\Omega \, \left|\nabla \psi(\theta,\varphi)\right|^2= \left\|\psi \right\|^2_{H^1(\mathbb S^2)},
    \end{equation}
    which shows that also $U\psi\in H^1(\mathbb S^2)$ and hence the Sobolev space is invariant under rotations. With this in hand, considering a rotation $U$ that sends $\hat{\bm x}$ into $\z$, we find that
    \begin{equation}
        \|L_x\psi\|^2=\|UL_xU^\dagger  U\psi\|^2=\|L_z \psi^\prime\|^2,
    \end{equation}
    with $\psi'=U\psi$ which is again in $H^1(\Stwo)$. Therefore, a bound as in~\eqref{eq:Lzfinite} shows that the previous quantity is finite and hence $\psi \in D(L_x)$. A similar reasoning shows that $\psi \in D(L_y)$.
    
    Now let us show the opposite inclusion, $D(\bm L)\subseteq H^1(\mathbb S^2)$. For a $\psi \in D(\bm L)$ we have that
    \begin{align}
        \left\|L_{z}\psi\right\|^2&=\int \rmd \Omega\left|\frac{\partial \psi}{\partial \varphi}\right|^2 <+\infty , \label{eq:conditionLz}\\
         \left\|L_x \psi\right\|^2&=\int \rmd \Omega\left|-\rmi\left(-\sin (\varphi) \frac{\partial \psi}{\partial \theta}-\cos (\varphi) \cot (\theta) \frac{\partial \psi}{\partial \varphi}\right)\right|^2<+\infty ,\label{conditionLx} \\
         \left\|L_y \psi\right\|^2&=\int \rmd \Omega\left|-\rmi\left(\cos (\varphi) \frac{\partial \psi}{\partial \theta}-\sin (\varphi) \cot (\theta) \frac{\partial \psi}{\partial \varphi}\right)\right|^2<+\infty , \label{conditionLy}
    \end{align}
    and with this we need to show~\eqref{eq:sobolevconditions}. Summing equations~\eqref{conditionLx} and~\eqref{conditionLy} we find that
    \begin{equation}
        \|L_x\psi\|^2+\|L_y\psi\|^2=\int \rmd \Omega\left[\left|\frac{\partial \psi}{\partial \theta}\right|^2+\cot^2 (\theta)\left|\frac{\partial \psi}{\partial \varphi}\right|^2\right] <+\infty \label{eq:conditionLx+Ly}
    \end{equation}
    since both terms in the sum are positive, each one must be integrable on its own. The first one appears in the formula~\eqref{eq:sobolevconditions}. It remains to show that
    \begin{equation}
        \int \rmd\Omega \frac{1}{\sin^2(\theta)}\left| \frac{\partial \psi }{\partial \varphi}\right|^2<+\infty.
    \end{equation}
    We already know from~\eqref{eq:conditionLx+Ly} that
    \begin{equation}
         \int \rmd\Omega \frac{\cos^2(\theta)}{\sin^2(\theta)}\left| \frac{\partial \psi }{\partial \varphi}\right|^2<+\infty;\label{eq:cos/sin}
    \end{equation}
    the presence of the squared cosine could mask a possible divergence of $|\partial \psi/\partial \varphi|^2$ at $\theta=\pi/2$, but this cannot be the case, as $|\partial \psi/\partial \varphi|^2$ is also controlled by the condition $\|L_z\psi\|<+\infty$, which we have not used yet. In particular, subdividing the integration domain over $\theta$ as $[0,\pi]=[0,\pi/4]\cup[\pi/4,3\pi/4]\cup [3\pi/4,\pi]$ and using the fact that $\sin^2(\theta)\ge 1/2$ in $[\pi/4,3\pi/4]$, we have that
    \begin{equation}
        \int_{\frac{\pi}{4}}^{\frac{3\pi}{4}}\rmd\theta \sin (\theta) \int_0^{2\pi} \rmd\varphi \, \frac{1}{\sin^2(\theta)}\left|\frac{\partial \psi}{\partial \varphi}\right|^2\le 2\int_{\frac{\pi}{4}}^{\frac{3\pi}{4}}\rmd\theta \sin (\theta) \int_0^{2\pi} \rmd\varphi \, \left|\frac{\partial \psi}{\partial \varphi}\right|^2,
    \end{equation}
    which is finite according to equation~\eqref{eq:conditionLz}.
    On the other hand, since $\cos^2\theta\ge1/2$ in $A=[0,\pi/4]\cup[3\pi/4,\pi]$, we have that
    \begin{equation}
        \int_A\rmd\theta \sin (\theta) \int_0^{2\pi} \rmd\varphi \, \frac{1}{\sin^2(\theta)}\left|\frac{\partial \psi}{\partial \varphi}\right|^2\le 2\int_A\rmd\theta \sin (\theta) \int_0^{2\pi} \rmd\varphi \, \frac{\cos^2(\theta)}{\sin^2(\theta)}\left|\frac{\partial \psi}{\partial \varphi}\right|^2,
    \end{equation}
which is also finite according to equation~\eqref{eq:cos/sin}.
\end{proof}

\begin{figure}
    \centering
    \includegraphics[width=0.7\linewidth]{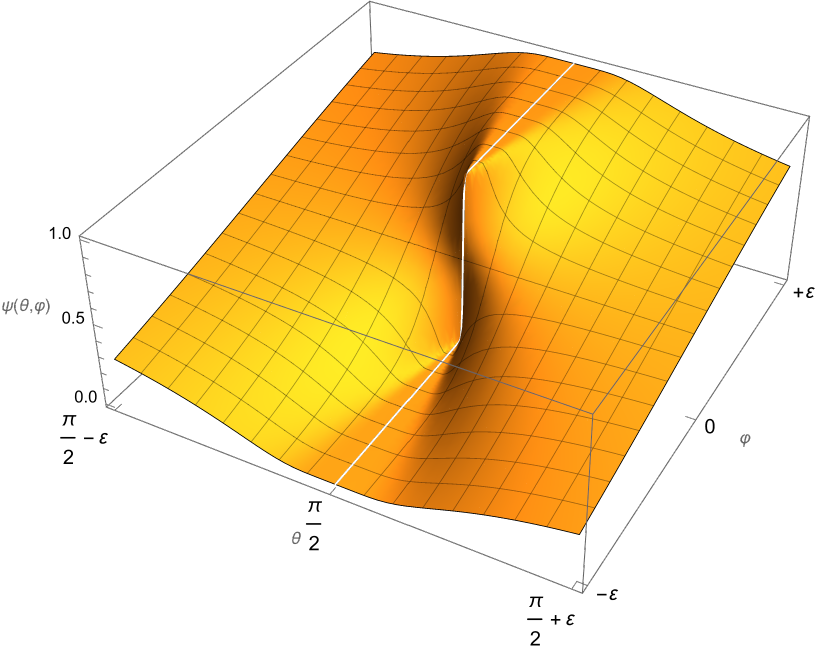}
    \caption{Plot of the wave function $\psi(\theta,\varphi)$ defined in equation \eqref{eq:exfunction}. We set $\varepsilon=\pi/6$. The white line marks the equator, i.e. the line at $\theta=\pi/2$.}
    \label{fig:onlyfig}
\end{figure}

One could wonder whether with only two of the domains of the angular momentum operators we could still reach the equality with $H^1(\Stwo)$. The answer is negative. In fact, if, for example, $D(L_x)\cap D(L_y)$ were equal to $H^1(\Stwo)$, then we would  have the inclusion $D(L_x)\cap D(L_y)\subseteq D(L_z)$. However, this relation is easily seen to be false. This is precisely the case, as discussed after equation~\eqref{eq:cos/sin}, in which the wave function has a partial derivative with respect to $\varphi$ that is not square-integrable due to a singularity on the equator that can be removed multiplying by $\cos (\theta)$. We now provide an explicit example.
\begin{example}
Fix a $0<\varepsilon<\pi/2$ and consider a wave function defined for $(\theta,\varphi)\in[\pi/2-\varepsilon,\pi/2+\varepsilon]\times[-\varepsilon,+\varepsilon]$ as
\begin{equation}
  \psi(\theta,\varphi) = 
  \left\{
    \begin{array}{ll}
      \left[1 + \exp\left(-\frac{\varphi 
      }{\left|\theta - \pi/2\right|}\right)\right]^{-1}, & \quad \text{if } \theta \neq \frac \pi 2, \\[2ex]
      H(\varphi ), & \quad \text{if } \theta = \frac \pi 2,
    \end{array}
  \right.
  \label{eq:exfunction}
\end{equation}
with $H(x)$ being the Heaviside theta function, defined as $H(x)=0$ for $x <0$, $H(x)=1$ for $x>0$ and $H(x)=1/2$ for $x=0$. Then, we connect this function to 0 with a smooth transition, so that the final function is defined over the whole sphere.

Along each parallel (i.e., lines with $\theta=\text{const.}$) away from the equator, $\psi$ behaves as a sigmoid function. The width of the transition region of the sigmoid (and hence its steepness) is controlled by the factor $\abs{\theta-\pi/2}$; consequently, as $\theta\to\pi/2$ the sigmoid progressively sharpens, converging to the Heaviside step function. See figure \ref{fig:onlyfig}.

This function is bounded on the sphere, hence square-integrable. The same is true for its partial derivative with respect to $\theta$. On the other hand, its partial derivative with respect to $\varphi$, for $\theta \neq \pi/2$, equals
\begin{equation}
    \frac{\partial \psi}{\partial \varphi}=\frac{\rme^{-\frac{\varphi}{\abs{\theta - \pi/2}}}}{\round{1+\rme^{-\frac{\varphi}{\abs{\theta - \pi/2}}}}^{2}}\frac{1}{\abs{\theta -\pi/2}}
\end{equation}
and it is not square integrable. To see this, perform the change of variables $x=\varphi$,  $y=\theta-\pi/2$. The integration measure becomes $\rmd \Omega=\sin (\theta) \rmd \theta \rmd \varphi= \cos (y) \rmd x \rmd y$, while the integration domain becomes $\mathcal D=[-\varepsilon,\varepsilon]\times[-\varepsilon,\varepsilon]$. Now consider
\begin{equation}
    \mathcal D'=\{ (x,y) \in [0,\varepsilon]\times[0,\varepsilon]:  x\le y  \}\subseteq\mathcal D. \label{eq:modifieddomain}
\end{equation}
In $\mathcal D'$ we have $\round{1+\rme^{-x/y}}^{-1}\ge\frac{1}{2}$, as well as $\rme^{-x/y}\ge e^{-1}$. 
The following chain of inequalities shows that $\partial \psi/\partial \varphi$ is not square-integrable:
\begin{align}
    \int_\mathcal D \abs{\frac{\partial \psi(x,y)}{\partial \varphi}}^2 \cos(y)\, \rmd x \rmd y&\ge \frac 1 2\int_{\mathcal D'} \abs{\frac{\partial \psi(x,y)}{\partial \varphi}}^2 \,\rmd x \rmd y 
    \nonumber\\&= \int_{\mathcal D'}\frac{\rme^{-\frac{2x}{\abs{y}}}}{\round{1+\rme^{-\frac x{\abs{y}}}}^{4}}\frac{1}{y^2}\, \,\rmd x \rmd y
    \nonumber\\
    &\ge\frac {1}{16 e^2}\int_0^\varepsilon \rmd x\int_x^\varepsilon \frac{1}{y^2}\,\rmd y=+\infty.
\end{align}
In the first step, $\cos(y)$ is bounded from below by $1/2$, which is true if $\varepsilon$ is chosen sufficiently small; moreover, the first inequality also holds in force of the inclusion in~\eqref{eq:modifieddomain}, given that the integrand is positive. 

Summing up, this wave function is not in $D(L_z)$, but it is in $D(L_x)\cap D(L_y)$. In fact, upon multiplication by $\cos(\theta)$, the partial derivative with respect to $\varphi$ becomes bounded and hence square-integrable. Thus, according to equation~\eqref{eq:conditionLx+Ly}, $\psi \in D(L_x)\cap D(L_y)$.

\end{example}
\label{appendix}

 \bibliographystyle{elsarticle-num} 
 \bibliography{bibliography}

 \end{document}